\newcommand{\cC}{\mathcal{C}}
\newcommand{\F}{\mathbb{F}}
\newcommand{\vc}{{\bf c}}
\newtheorem{theorem}{Theorem}[section]
\newtheorem{lemma}[theorem]{Lemma}
\newtheorem{example}[theorem]{Example}
\begin{document}

\begin{frontmatter}

%% Title, authors and addresses

%% use the tnoteref command within \title for footnotes;
%% use the tnotetext command for the associated footnote;
%% use the fnref command within \author or \address for footnotes;
%% use the fntext command for the associated footnote;
%% use the corref command within \author for corresponding author footnotes;
%% use the cortext command for the associated footnote;
%% use the ead command for the email address,
%% and the form \ead[url] for the home page:
%%
%% \title{Title\tnoteref{label1}}
%% \tnotetext[label1]{}
%% \author{Name\corref{cor1}\fnref{label2}}
%% \ead{email address}
%% \ead[url]{home page}
%% \fntext[label2]{}
%% \cortext[cor1]{}
%% \address{Address\fnref{label3}}
%% \fntext[label3]{}

\title{Near-MDS Codes  from  Maximal Arcs  in PG$(2,q)$}
\author[SWJTU]{Li Xu}
 \ead{xuli1451@163.com}
\author[SWJTU]{Cuiling Fan}
\ead{cuilingfan@163.com}
\author[SWJTU]{Dongchun Han}
 \ead{han-qingfeng@163.com}

%

%\author[SWJTU]{Haode Yan}
% \ead{hdyan@swjtu.edu.cn}
%\author[SWJTU]{Zhengchun Zhou\corref{cor1}}
% \ead{zzc@home.swjtu.edu.cn,zczhou@126.com}
%\author[NNU]{Xiaoni Du}
% \ead{ymLdxn@126.com}
%
%
% \cortext[cor1]{Corresponding author}
% \address[SWJTU]{School of Mathematics, Southwest Jiaotong University, Chengdu, 610031, China}
%\address[NNU]{College of Mathematics and Statistics, Northwest Normal University, Lanzhou, 730070, China}
%\tnotetext[fn1]{C. Fan's research was supported by
%the Natural Science Foundation of China, Proj. No. 11571285, Z. Zhou's research was supported by
%the Natural Science Foundation of China, Proj. No. 61201243, the Sichuan Provincial Youth Science and Technology Fund under Grant
%2015JQO004, and the Open Research Fund of National Mobile Communications Research Laboratory, Southeast University under Grant 2013D10.}

% use optional labels to link authors explicitly to addresses:
% \author[label1,label2]{<author name>}
% \address[label1]{<address>}
% \address[label2]{<address>}

%\author[SWJTU]{Cuiling Fan\corref{cor1}}
% \ead{cuilingfan@163.com}
%\author[UB]{Nian Li}
% \ead{nianli.2010@gmail.com}
%\author[SWJTU]{Zhengchun Zhou}
% \ead{zzc@home.swjtu.edu.cn,zczhou@126.com}
%%
%
 %\cortext[cor1]{Corresponding author}
% \address[SWJTU]{School of Mathematics, Southwest Jiaotong University, Chengdu, 610031, China}
%\address[UB]{Department of Informatics, University of Bergen, N-5020 Bergen, Norway}
%%Department of Informatics, University of Bergen, N-5020 Bergen, Norway
\begin{abstract}
The singleton defect of an $[n,k,d]$ linear code ${\cal C}$ is defined as $s({\cal C})=n-k+1-d$. Codes with $S({\cal C})=0$ are called maximum distance separable (MDS) codes, and codes with $S(\cC)=S(\cC ^{\bot})=1$ are called near maximum distance separable (NMDS) codes.
Both MDS codes and NMDS codes have good representations in finite projective geometry.
 MDS codes over $\F_q$ with length $n$ and $n$-arcs in PG$(k-1,q)$ are equivalent objects. When $k=3$, NMDS codes of length $n$ are equivalent to $(n,3)$-arcs in PG$(2,q)$. In this paper, we deal with the NMDS codes with dimension 3. By adding some suitable projective points in maximal arcs of PG$(2,q)$, we can obtain two classes of $(q+5,3)$-arcs (or equivalently $[q+5,3,q+2]$ NMDS codes) for any prime power $q$.
    We also determine the exact weight distribution
  and the locality of such NMDS codes and their duals. It turns out that the resultant NMDS codes and their duals are both distance-optimal and dimension-optimal locally recoverable codes.

\end{abstract}

\begin{keyword}
MDS codes \sep NMDS codes \sep locally recoverable codes \sep hyperoval, oval.

%\MSC  94B15\sep 11T71

\end{keyword}

\end{frontmatter}

\section{Introduction}\label{introduction}

Let $q$ be a prime power, and $\F_q^n$ denote the vector space of all the $n$-tuples over the finite field $\F_q$. A $q$-ary $[n,k]$ {\it linear code} ${\cal C}$ is a $k$-dimensional subspace of $\F_q^n$, and the $n$-tuples of ${\cal C}$ are called {\it codewords}.
The value $n$ is called the {\it length} of ${\cal C}$. The {\it weight} wt$({\bf x})$ of a vector ${\bf x}\in\F_q^n$ is the number of nonzero coordinates of ${\bf x}$. The minimum nonzero weight of all codewords in ${\cal C}$ is called the {\it minimum distance} of ${\cal C}$ and an $[n,k]$ linear code with minimum distance $d$ is called an $[n,k,d]$ linear code. The {\it generator matrix}  of ${\cal C}$ is a matrix $G$ whose rows form a basis of ${\cal C}$ as an $\F_q$-vector space, since ${\cal C}=\{{\bf x} G:{\bf x}\in \F_q^k\}$. The {\it dual code} of ${\cal C}$ is defined to be its orthogonal subspace $\cC ^{\bot}$ with respect to the Euclidean inner product, i.e.,
$\cC ^{\bot}=\{{\bf x} \in \F_q^n: {\bf x}\cdot{\bf c}=0,~\forall~ {\bf c} \in \cC\}$.

The projective space of dimensional $r$ obtained from $\F_q$ will be denoted by PG$(r,q)$, where its points are the one-dimensional subspaces of $\F_q^{r+1}$ and its hyperplanes are the $r$-dimensional subspaces of  $\F_q^{r+1}$.
Obviously any nonzero vector ${\bf u}\in \F_q^{r+1}$ can define a hyperplane ${\cal H}_{\bf u}=\{{\bf x}\in\F_q^{r+1}:{\bf x}\cdot {\bf u}=0\}$, and ${\cal H}_{\bf u}={\cal H}_{\bf v}$ if and only if ${\bf u}=\lambda{\bf v}$ for some $\lambda\in \F_q^*$. An $n$-arc in PG$(r,q)$ is a set of $n$ points in PG$(r,q)$ such that no $r+1$ of them lie in a hyperplane. An arc is said to be {\it maximal} if it has the  maximal numbers of points as arcs.

For a $q$-ary $[n,k,d]$ linear code ${\cal C}$ with a generator matix $G$, let $S_G$ be the set of one-dimensional subspaces of $\F_q^k$ spaned by the columns of $G$, so the elements of $S_G$ are points of PG$(k-1,q)$. The following lemma states a relationship between weights of  codewords in ${\cal C}$ and hyperplanes in PG$(k-1,q)$.

\begin{lemma}\cite{Ball2015}\label{Weight-lem1}
Let ${\bf u}$ be a nonzero vector of $\F_q^k$. The codeword ${\bf u}G$	 has weight $w$ if and only if the hyperplane ${\cal H}_{\bf u}$ in PG$(k-1,q)$ contains $\#(S_G)-w$ points of $S_G$, i.e., $\#({\cal H}_{\bf u}\cap S_G)=\#(S_G)-$wt$({\bf u}G)$. Especially, any hyperplane in PG$(k-1,q)$ is incident with at most $n-d$ points of $S_G$. Here the notation $\#(A)$ denotes the cardinality of set $A$.
\end{lemma}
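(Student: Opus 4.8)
The plan is to translate the weight of the codeword $\mathbf{u}G$ directly into an incidence count between the columns of $G$ and the hyperplane $\mathcal{H}_{\mathbf{u}}$, exploiting that each coordinate of $\mathbf{u}G$ is a single inner product. First I would name the columns of $G$ as $\mathbf{g}_1,\dots,\mathbf{g}_n\in\F_q^k$, so that $S_G=\{\langle\mathbf{g}_1\rangle,\dots,\langle\mathbf{g}_n\rangle\}$, where $\langle\mathbf{g}_i\rangle$ is the projective point (one-dimensional subspace) spanned by $\mathbf{g}_i$. The starting identity to record is that the $i$-th coordinate of $\mathbf{u}G$ equals the Euclidean inner product $\mathbf{u}\cdot\mathbf{g}_i$: this is immediate from the definition of matrix multiplication, since $(\mathbf{u}G)_i=\sum_{j} u_j G_{ji}=\mathbf{u}\cdot\mathbf{g}_i$.

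With this identity in hand, the core of the argument is a coordinate-by-coordinate dichotomy. The $i$-th coordinate of $\mathbf{u}G$ vanishes if and only if $\mathbf{u}\cdot\mathbf{g}_i=0$, and by the definition $\mathcal{H}_{\mathbf{u}}=\{\mathbf{x}\in\F_q^k:\mathbf{x}\cdot\mathbf{u}=0\}$ this is exactly the condition that $\langle\mathbf{g}_i\rangle$ lies on $\mathcal{H}_{\mathbf{u}}$. Hence the number of vanishing coordinates of $\mathbf{u}G$ equals the number of columns whose associated projective point lies on $\mathcal{H}_{\mathbf{u}}$. Since $\wt(\mathbf{u}G)$ counts the \emph{nonzero} coordinates, I would conclude $\wt(\mathbf{u}G)=n-\#\{i:\langle\mathbf{g}_i\rangle\in\mathcal{H}_{\mathbf{u}}\}$ and then rearrange, once the right-hand count is matched with $\#(\mathcal{H}_{\mathbf{u}}\cap S_G)$.

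The only step needing care — and the main obstacle — is precisely this last matching, i.e.\ passing from the multiset of columns to the genuine point set $S_G$. The identification $\#\{i:\langle\mathbf{g}_i\rangle\in\mathcal{H}_{\mathbf{u}}\}=\#(\mathcal{H}_{\mathbf{u}}\cap S_G)$ together with $\#(S_G)=n$ is valid only when the $n$ columns determine $n$ \emph{distinct} projective points (no column zero, no two columns scalar multiples of each other). In the arc setting governing this paper this distinctness holds by construction, so I would either adopt it as a standing assumption on $G$ or build it into the statement that $S_G$ is in bijection with the columns. Under it, the count on columns equals the intersection count on points, and substituting yields $\#(\mathcal{H}_{\mathbf{u}}\cap S_G)=\#(S_G)-\wt(\mathbf{u}G)$, which is the claimed equivalence.

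Finally, the ``Especially'' clause would follow as a one-line corollary. Because $G$ is a generator matrix its rows are linearly independent, so $G$ has full row rank $k$ and the map $\mathbf{u}\mapsto\mathbf{u}G$ is injective; thus every nonzero $\mathbf{u}\in\F_q^k$ yields a nonzero codeword $\mathbf{u}G\in\cC$, whence $\wt(\mathbf{u}G)\ge d$ by definition of the minimum distance. Substituting into $\#(\mathcal{H}_{\mathbf{u}}\cap S_G)=n-\wt(\mathbf{u}G)$ gives $\#(\mathcal{H}_{\mathbf{u}}\cap S_G)\le n-d$, and since every hyperplane of $\mathrm{PG}(k-1,q)$ equals $\mathcal{H}_{\mathbf{u}}$ for some nonzero $\mathbf{u}$, this bound holds for all hyperplanes, completing the proof.
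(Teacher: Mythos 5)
Your argument is correct and is the standard proof of this fact: the paper itself cites it from Ball's book without proof, so there is nothing to diverge from. You also rightly flag the one implicit hypothesis --- that the $n$ columns of $G$ are nonzero and pairwise non-proportional, so that they determine $n$ distinct points and $\#(S_G)=n$ --- which holds by construction in the arc setting of this paper; with that in place, both the main equivalence and the ``Especially'' clause follow exactly as you describe.
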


The Singleton bound \cite{Singleton1964} states a relationship among $n,k$ and $d$: $d \leq n-k+1$. Codes meeting the singleton bound are called {\it maximum distance separable} (MDS) codes. Bounds on the minimum distance of linear codes can be found in \cite{Macwilliams1977}.
The MDS conjecture states that the maximum length of a nontrivial MDS code is less than or equal to $q+2$ if $q$ is even and $k=3$ or $k=q-1$ and it is less than or equal to $q+1$ otherwise.
This conjecture was famously resolved in the case when $q$ is prime by Ball \cite{Ball2012, 2Ball2012}.
%The conjecture has been proved when $.....$ \cite{?}.
In order to get longer codes, the value of $d$ needs to be less than $n-k+1$.
The {\it singleton defect} of an $[n,k,d]$ linear code ${\cal C}$, defined as $s(\mathcal{C}) = n - k + 1 - d$, measures how far ${\cal C}$ is from being MDS. Codes with $S({\cal C})=1$ are called {\it almost maximum distance separable }(AMDS) codes, and codes with $S(\cC)=S(\cC ^{\bot})=1$ are called {\it near maximum distance separable} (NMDS) codes.

 NMDS codes were introduced by Dodunekov and Landjev \cite{Dodunekov1995} with the aim of constructing good linear codes by slightly weakening the restrictions in  definition of MDS codes. They are closely connected to some combinatoric objects such as designs \cite{Ding2020, Dodunekov1995} and arcs in projective geometries \cite{Dodunekov1995}. NMDS codes also have applications in secret sharing schemes \cite{2018Simos, 2009Zhou}.
 One main topic in NMDS codes is to determine the maximum possible length $m'(k,q)$ for $q$-ary nontrivial NMDS codes. Some results have been obtained \cite{2015Landjev}. Similar to the MDS conjecture, the following conjecture for NMDS codes was proposed in \cite{Dodunekov1995,2015Landjev}: $m'(k,q)\leq 2q+2$, especially, $m'(3,q)\leq 2q-1$ for $q\geq8$, and it is known that $m'(3,q)=2q-1$ for $q=8,9$. Let $N_q$ denote the maximum number of $\F_q$-rational points on elliptic curves defined over $\F_q$; it is well-known, by Hasse theorem, $|N_q-(q+1)|\leq 2\sqrt{q}$.
 NMDS codes of length up to $N_q$ can be constructed from elliptic curves \cite{1991Tsfasman}. An interesting question is whether there exist NMDS codes of length greater than $N_q$. Further information about NMDS codes of size equal or less than $m'(k,q)$ can be found in
 \cite{2005Abatangelo, Aguglia2021,1996Ball,2014Bartoli, Ding2020,Dodunekov1995,Dodunekov2000, 2022Geng,Gulliver2008, 1999Marcugini, 2002Marcugini, Wang2021}.

Both MDS codes and NMDS codes can be investigated within finite projective geometry. This possibility depends on the fact that such $[n,k]$ linear codes have a good representation in PG$(k-1,q)$. MDS codes and $n$-arcs are indeed equivalent objects, while an $[n,k,n-k]$ NMDS code can be viewed as a point-set $S$ of size $n$ in PG$(k-1,q)$ satisfying the following conditions \cite{Dodunekov1995}:

\begin{enumerate}
	\item[N1)] any $k-1$ points from $S$ generate a hyperplane in PG$(k-1,q)$;
	\item[N2)] there exist $k$ points in a hyperplane in PG$(k-1,q)$;
	\item[N3)] every  $k+1$ points from $S$ generate PG$(k-1,q)$.
\end{enumerate}

When $k=3$, these properties reduce to the following:
\begin{enumerate}
	\item[$N1'$)] any two points from $S$ generate a line in PG$(2,q)$;
	\item[$N2'$)] there exist three collinear  points;
	\item[$N3'$)] no four points from $S$ lie on a line.
\end{enumerate}

A point-set $S$ of size $n$ in PG$(2,q)$ satisfying the properties $N2')$ and $N3')$ is called an $(n,3)$-arc (in the definition the property $N2')$ avoids to consider $n$-arcs as special cases of $(n,3)$-arcs). Every $[n,3,n-3]$ NMDS codes is therefore equivalent to an $(n,3)$-arc in PG$(2,q)$.

In this paper we focus on the constructions of NMDS codes with dimension 3, which is equivalent to construct $(n,3)$-arc in PG$(2,q)$. Since an arc of PG$(2,q)$ intersects any line in at most two points, a natural idea is to add some suitable points of PG$(2,q)$ in the arc, such that the resultant point-set forms an $(n,3)$-arc.
The size of any arc in PG$(2,q)$ is no more than $q+2$ for even $q$, and $q+1$ for odd $q$ \cite{Hirschfeld1979}. When $q$ is odd, $(q+1)$-arcs are called {\it ovals}, which intersect a line in zero, one or two points.
When $q$ is even, $(q+2)$-arcs are referred to as {\it hyperovals}, which intersect a line in either zero or two points. More information about ovals and hyperovals can be found in \cite{Hirschfeld1979}.

Hyperovals had been used to construct some families of $[n,3,n-3]$ NMDS codes in \cite{Li2022-arxiv,Li2022, Wang2021}, where $n\in\{2^m+1\} \cup\{ 2^m+2, 2^m+3,2^m+4,2^m+5: ~m~{\rm is ~ odd}\}$. Note that hyperovals are maximal arcs in  PG$(2,2^m)$, and the maximal arcs  in PG$(2,q)$ ($q$ is odd), ovals, have the similar properties as hyperovals. By applying maximal arcs (ovals or hyperovals) in PG$(2,q)$, we construct two families of $[q+5,3,q+2]$ NMDS codes for any prime power $q$. The main idea is to add four suitable points in some oval or three suitable points in some hyperoval, such that the result point-sets form $(q+5,3)$-arcs in PG$(2,q)$. Our result is a generalization of that in \cite{Li2022}.
The weight distribution and the locality of these NMDS codes are also determined. It turns out that the resultant NMDS codes and their duals are both distance-optimal and dimensional-optimal locally recoverable codes.

The rest of the paper is organized as follows: Section \ref{Section2} provides some preliminaries on NMDS codes, ovals and hyperovals in  PG$(2,q)$. In Section \ref{Sec: NMDS}, we propose two constructions of $[q+5,3,q+2]$ NMDS codes from ovals or hyperovals in   PG$(2,q)$, respectively. The exact weight distribution of these NMDS  codes are also presented. In Section \ref{Sec: LRC}, we determined the locality of these NMDS codes and their duals. It turns out that they are both distance-optimal and dimensional optimal. Section \ref{Conclusion} concludes this paper and gives some remarks.

\section{Preliminaries}\label{Section2}

In this section, we mainly provide some preliminaries which will be used in the rest of the paper.

\subsection{NMDS codes}

Let $\mathcal {C}$ be a $q$-ary $[n,k,n-k]$ NMDS codes with $k\geq3$ and generator matrix $G=[{\bf g}_1,{\bf g}_2,\ldots,{\bf g}_n]$, where ${\bf g}_i\in\F_q^k$.  Since $k\geq 3$, the columns of $G$ can be viewed as different points in PG$(k-1,q)$. Define $S_G=\{{\bf g}_1,{\bf g}_2,\ldots,{\bf g}_n\}$. Then $S_G$ is a point-set of PG$(k-1,q)$ satisfying $N1),N2)$ and $N3)$ in Section \ref{introduction}.

Let $A_i$ be the number of codewords with weight $i$ in $\cC$, where $0 \leq i \leq n$. The polynomial
$A(z)= 1+ A_1 z + A_2 z^2 +...+ A_n z^n$
is called the {\it weight enumerator} of $\cC$.
Denote by $(1,A_1,A_2,\cdots,A_n)$ and $(1,A_1^{\bot},A_2^{\bot},\cdots ,A_n^{\bot})$ the weight distributions of $\cC$ and its dual $\cC^{\bot}$, respectively. We have the following weight distribution formulas for NMDS codes.

\begin{lemma}[\cite{Dodunekov1995}] \label{weight}
 Let ${\cal C}$ be an $[n, k, n-k]$ NMDS code. Then the weight distributions of $\mathcal{C}^{\perp}$ and $\mathcal{C}$ are given by
$$
A_{k+s}^{\perp}=\left(\begin{array}{c}
n \\
k+s
\end{array}\right) \sum_{j=0}^{s-1}(-1)^{j}\left(\begin{array}{c}
k+s \\
j
\end{array}\right)\left(q^{s-j}-1\right)+(-1)^{s}\left(\begin{array}{c}
n-k \\
s
\end{array}\right) A_{k}^{\perp},
$$
for $s \in\{1,2, \ldots, n-k\}$, and
$$
A_{n-k+s}=\left(\begin{array}{c}
n \\
k-s
\end{array}\right) \sum_{j=0}^{s-1}(-1)^{j}\left(\begin{array}{c}
n-k+s \\
j
\end{array}\right)\left(q^{s-j}-1\right)+(-1)^{s}\left(\begin{array}{l}
k \\
s
\end{array}\right) A_{n-k},
$$
for $s \in\{1,2, \ldots, k\}$.
\end{lemma}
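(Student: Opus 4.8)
The plan is to combine the MacWilliams identities with the strong constraints that the NMDS property imposes on the low-weight coefficients, and then to reduce everything to a single triangular linear system that can be inverted in closed form. First I would record the elementary consequences of $s(\cC)=s(\cC^\bot)=1$: since $d=n-k$ we have $A_0=1$ and $A_i=0$ for $1\le i\le n-k-1$, while $\cC^\bot$ is an $[n,n-k,k]$ code, so $A_0^\bot=1$ and $A_i^\bot=0$ for $1\le i\le k-1$. I would also observe that the two displayed formulas are dual to each other: applying the first formula to $\cC^\bot$ (whose dimension is $n-k$ and whose dual is $\cC$), and using $\binom{n}{n-k+s}=\binom{n}{k-s}$ and $\binom{n-(n-k)}{s}=\binom{k}{s}$, reproduces the second (the index range $\{1,\dots,n-(n-k)\}=\{1,\dots,k\}$ also matches). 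Hence it suffices to prove the formula for $A_{k+s}^\bot$.

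For this I would use the binomial form of the MacWilliams identities applied to $\cC^\bot$: for $0\le \nu\le n$,
$$\sum_{i=0}^{n-\nu}\binom{n-i}{\nu}A_i^\bot=q^{(n-k)-\nu}\sum_{i=0}^{\nu}\binom{n-i}{n-\nu}A_i.$$
The key is to take $\nu=n-k-s$ with $s\in\{1,\dots,n-k\}$, so that $\nu\le n-k-1$. Then on the right-hand side the summation index satisfies $i\le\nu\le n-k-1$, so only $i=0$ survives and the right-hand side equals $q^s\binom{n}{k+s}$; on the left-hand side only $i=0$ and $i\ge k$ contribute, because $A_1^\bot,\dots,A_{k-1}^\bot$ vanish. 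Writing $a_m=A_{k+m}^\bot$, moving the constant $i=0$ term to the right, and using $\binom{n-i}{n-k-s}=\binom{n-k-m}{s-m}$ with $m=i-k$, this yields the lower-triangular system
$$\sum_{m=0}^{s}\binom{n-k-m}{s-m}\,a_m=(q^s-1)\binom{n}{k+s},\qquad s\ge 1,$$
in which $a_0=A_k^\bot$ plays the role of the single free parameter.

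The main work, and the step I expect to be the real obstacle, is inverting this triangular system explicitly. I would prove that the matrix with entries $\binom{n-k-m}{s-m}$ has inverse with entries $(-1)^{s-m}\binom{n-k-m}{s-m}$; this reduces to the alternating binomial identity $\sum_{u=0}^{r}(-1)^u\binom{p-u}{r-u}\binom{p}{u}=\delta_{r,0}$, provable by a short induction or generating-function argument. Applying the inverse and isolating the $m=0$ term gives
$$A_{k+s}^\bot=(-1)^s\binom{n-k}{s}A_k^\bot+\sum_{m=1}^{s}(-1)^{s-m}\binom{n-k-m}{s-m}(q^m-1)\binom{n}{k+m}.$$
Finally I would match this with the claimed expression via the substitution $j=s-m$ together with the identity $\binom{n-k-m}{s-m}\binom{n}{k+m}=\binom{n}{k+s}\binom{k+s}{s-m}$, which is immediate after writing both sides as $n!/\big((k+m)!\,(s-m)!\,(n-k-s)!\big)$. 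This turns the correction sum into $\binom{n}{k+s}\sum_{j=0}^{s-1}(-1)^j\binom{k+s}{j}(q^{s-j}-1)$, completing the first formula; the second then follows by the duality remark above.
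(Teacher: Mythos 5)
The paper does not prove this lemma; it is quoted verbatim from Dodunekov--Landjev \cite{Dodunekov1995}, so there is no in-paper argument to compare against. Your proof is correct and is essentially the standard derivation of the NMDS weight-distribution formulas: the binomial-moment form of the MacWilliams identities, the vanishing of $A_1,\dots,A_{n-k-1}$ and $A_1^{\perp},\dots,A_{k-1}^{\perp}$ to kill all but one term on one side, inversion of the resulting unitriangular system (your inversion identity reduces, via $\binom{p}{u}\binom{p-u}{r-u}=\binom{p}{r}\binom{r}{u}$, to $\sum_u(-1)^u\binom{r}{u}=\delta_{r,0}$, which is fine), and the trinomial-coefficient rewriting to reach the stated form; the duality remark correctly reduces the second formula to the first since $\cC^{\perp}$ is itself an $[n,n-k,k]$ NMDS code. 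All the steps check out, including the treatment of $A_k^{\perp}$ as the single free parameter of the triangular system (the homogeneous solution $(-1)^s\binom{n-k}{s}A_k^{\perp}$).
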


For any codeword ${\bf c}=(c_1,c_2,\ldots,c_n)\in{\cal C}$, define $supp({\bf c})=\{1\leq i\leq n:c_i\neq 0\}$.
Up to a multiple, there is a natural correspondence between the minimum weight codewords of an NMDS code ${\cal C}$ and its dual $\mathcal {C}^{\perp}$, which follows from the next result \cite{Faldum1997}.

\begin{lemma}[\cite{Faldum1997}] \label{weight2}
Let $\mathcal {C}$ be an NMDS code. Then for every minimum weight codeword $\textbf{c}$ in $\mathcal {C}$ there exists, up to a multiple, a unique minimum weight codeword $\textbf{c}^{\perp}$ in $\mathcal {C}^{\perp}$ such that
$supp (\textbf{c}) \cap supp(\textbf{c}^{\perp})=\emptyset.$ In particular, $\mathcal {C}$ and $\mathcal {C}^{\perp}$ have the same number of minimum weight codewords.
\end{lemma}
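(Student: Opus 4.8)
The plan is to read off both extremal weights from the NMDS hypothesis and then pass to the column picture of Lemma~\ref{Weight-lem1}. Since $s(\cC)=s(\cC^{\perp})=1$, the code $\cC$ has minimum distance $d=n-k$ while $\cC^{\perp}$ has minimum distance $d^{\perp}=k$. Fix a generator matrix $G=[\mathbf{g}_1,\ldots,\mathbf{g}_n]$ of $\cC$ and recall that a vector $\mathbf{x}\in\F_q^n$ lies in $\cC^{\perp}$ precisely when $\sum_{i=1}^n x_i\mathbf{g}_i=0$. In this language $d^{\perp}=k$ says exactly that any $k-1$ of the columns $\mathbf{g}_i$ are linearly independent, i.e. property N1.

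First I would analyze a fixed minimum weight codeword $\mathbf{c}=\mathbf{u}G\in\cC$, where $\mathbf{u}\in\F_q^k\setminus\{0\}$. Let $Z=\{1,\ldots,n\}\setminus\mathrm{supp}(\mathbf{c})$ be its set of zero positions; since $\wt(\mathbf{c})=n-k$ we have $|Z|=k$. For every $i\in Z$ we have $\mathbf{u}\cdot\mathbf{g}_i=c_i=0$, so the $k$ columns $\{\mathbf{g}_i:i\in Z\}$ all lie in the hyperplane $\mathbf{u}^{\perp}$ of $\F_q^k$ (equivalently, by Lemma~\ref{Weight-lem1}, $\mathcal{H}_{\mathbf{u}}$ meets $S_G$ in exactly these $k$ points). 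Hence these $k$ vectors span a space of dimension at most $k-1$, so they are linearly dependent; combined with property N1 they have rank exactly $k-1$. Consequently the kernel of the $k\times k$ matrix with columns $\{\mathbf{g}_i:i\in Z\}$ is one-dimensional, and because every $k-1$ of these columns are independent, the essentially unique dependence relation $\sum_{i\in Z}\lambda_i\mathbf{g}_i=0$ has all $\lambda_i\neq 0$.

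This produces the desired codeword: set $c^{\perp}_i=\lambda_i$ for $i\in Z$ and $c^{\perp}_i=0$ otherwise. Then $\mathbf{c}^{\perp}\in\cC^{\perp}$, its support is exactly $Z$, so $\wt(\mathbf{c}^{\perp})=k=d^{\perp}$ and $\mathrm{supp}(\mathbf{c})\cap\mathrm{supp}(\mathbf{c}^{\perp})=\emptyset$. For uniqueness I would note that any minimum weight $\mathbf{c}^{\perp}$ of $\cC^{\perp}$ with support disjoint from $\mathrm{supp}(\mathbf{c})$ must be supported inside $Z$; since its weight is at least $k=|Z|$, its support is all of $Z$, hence it is a dependence among $\{\mathbf{g}_i:i\in Z\}$ and so a scalar multiple of the one just constructed.

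Finally, for the equality of the numbers of minimum weight codewords I would invoke symmetry. Running the identical argument with the roles of $\cC$ and $\cC^{\perp}$ interchanged (using $(\cC^{\perp})^{\perp}=\cC$), each minimum weight codeword of $\cC^{\perp}$ yields, up to a scalar, a unique minimum weight codeword of $\cC$ with complementary support. Because the disjoint support relation is symmetric, these two assignments are mutually inverse on scalar classes, giving a bijection between the projective classes of minimum weight codewords of $\cC$ and of $\cC^{\perp}$; multiplying by $q-1$ yields $A_{n-k}=A_k^{\perp}$. The only genuinely delicate point is verifying that the dependence coefficients $\lambda_i$ are all nonzero, so that $\mathbf{c}^{\perp}$ has full support $Z$ and is therefore minimum weight; this is exactly where property N1, the independence of every $k-1$ columns, is indispensable.
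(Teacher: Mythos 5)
Your argument is correct. Note, though, that the paper itself gives no proof of this lemma: it is quoted from Faldum--Willems \cite{Faldum1997} as a known fact, so there is no in-paper proof to compare against. What you have written is a clean, self-contained derivation of the cited result, and it is essentially the standard geometric argument: the zero set $Z$ of a weight-$(n-k)$ codeword $\mathbf{u}G$ indexes $k$ columns lying in the hyperplane $\mathcal{H}_{\mathbf{u}}$, property N1 (equivalently $d^{\perp}=k$) forces these columns to have rank exactly $k-1$ with a one-dimensional kernel supported on all of $Z$, and this kernel vector is the desired dual codeword. This is precisely the correspondence the paper later exploits implicitly, e.g.\ in Equations (\ref{weight-eq0})--(\ref{weight-eq1}) and in the remark before Theorem \ref{locality-even}, where a line $\mathcal{H}_{\mathbf{u}}$ meeting $S_G$ in exactly $k$ points is converted into a weight-$k$ codeword of $\cC^{\perp}$ with prescribed support. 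Your two delicate points are handled correctly: the nonvanishing of all $\lambda_i$ does follow from the independence of every $k-1$ columns, and the counting step is sound because the complementary-support relation is symmetric, so the two constructions are mutually inverse on projective classes and $A_{n-k}=A_k^{\perp}$ follows after multiplying by $q-1$. The only cosmetic caveat is that the lemma as stated applies to a general NMDS code of any parameters $[n,k,n-k]$, and your proof indeed only uses $s(\cC)=s(\cC^{\perp})=1$, so no generality is lost.
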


It follows from Lemmas \ref{weight} and \ref{weight2} that the weight distributions of $\mathcal{C}^{\perp}$ and $\cC$ can be   determined by the numbers of minimum weight codewords, and $A_{n-k}=A_{k}^{\bot}$. Let ${\cal C}_{n-k}$ be the set of minimum weight codewords in ${\cal C}$, and ${\cal H}_{\bf u}$ be the hyperplane in  PG$(k-1,q)$ determined by the nonzero vector ${\bf u}\in\F_q^k$. Then by Lemma \ref{Weight-lem1},  the following conclusions can be easily obtained:
\begin{equation}\label{weight-eq0}
	{\cal C}_{n-k}=\left\{{\bf u}G:{\bf 0}\neq{\bf u}\in\F_q^k,~\#({\cal H}_{\bf u}\cap S_G)=k\right\},
\end{equation}
and
\begin{equation}\label{weight-eq1}
	A_{n-k}=\#\left\{{\bf 0}\neq{\bf u}\in\F_q^k:\#({\cal H}_{\bf u}\cap S_G)=k\right\}.
\end{equation}
The above two results are easy but will play an important role in proving the main conclusions of the next sections.

\subsection{Ovals and Hyperovals}

Ovals and hyperovals are maximal arcs in PG$(2,q)$ for odd $q$ or even $q$, respectively.

When $q$ is odd, an oval in PG$(2,q)$ can be obtained in the following form:
\begin{equation}\label{oval-def}
	{\cal O}=\left\{(x^2,x,1):x\in\F_q \right\}\cup\{(1,0,0)\}.
\end{equation}

When $q$ is even, all hyperpvals in PG$(2,q)$ can be  constructed with a special type of permutation polynomials over $\F_q$, which is described in the following theorem \cite{Lidl1997}.

\begin{theorem}\label{ovalpoly}
Let $q \geq 2$ be a power of 2. Any hyperoval in  PG$(2, q)$ can be written in the following form
$$
\mathcal{HO}(f)=\{(f(c), c, 1): c \in \F_q\} \cup\{(1,0,0)\} \cup\{(0,1,0)\},
$$
where $f \in \F_q[x]$ is a permutation polynomial of $\F_q$ such that
 \begin{enumerate}
\item [(1)]  $deg(f)<q$ and $f(0)=0, f(1)=1$;
\item [(2)] for each $a \in \F_q, g_{a}(x):=(f(x+a)+f(a)) x^{q-2}$ is also a permutation polynomial of $\F_q$.
 \end{enumerate}
Conversely, every such set $\mathcal{HO}(f)$ is a hyperoval.
\end{theorem}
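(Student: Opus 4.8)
The plan is to set up a dictionary translating the single nontrivial incidence condition (``no three points collinear'') for the candidate set $\mathcal{HO}(f)$ into the permutation hypotheses on $f$, and then to run this dictionary in both directions. The computational heart is the observation that three affine points $(f(b),b,1),(f(c),c,1),(f(d),d,1)$ are collinear exactly when
\[
\det\begin{pmatrix} f(b) & b & 1 \\ f(c) & c & 1 \\ f(d) & d & 1 \end{pmatrix}=0,
\]
which in characteristic $2$ rearranges, for distinct $b,c,d$, into the slope identity $\frac{f(b)+f(d)}{b+d}=\frac{f(c)+f(d)}{c+d}$. Reading $x^{q-2}$ as the function sending $x\mapsto x^{-1}$ on $\F_q^*$ and $0\mapsto0$, the substitution $b=x_1+d$, $c=x_2+d$ turns these two slopes into $g_d(x_1)$ and $g_d(x_2)$. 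Thus collinearity of three affine points with ``base'' $d$ is precisely a collision $g_d(x_1)=g_d(x_2)$ with $x_1\ne x_2$; since $g_d(0)=0$ and $g_d(x)\ne 0$ for $x\ne0$ (because $f$ is injective), $g_d$ is a permutation of $\F_q$ if and only if no three affine points through the point indexed by $d$ are collinear.

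For the converse (every such $\mathcal{HO}(f)$ is a hyperoval) I would assume $f$ satisfies (1) and (2) and verify the two defining properties. The $q+2$ listed points are distinct: the affine points $(f(c),c,1)$ are distinct because their second coordinates are, and they differ from the two distinct points $(1,0,0)$ and $(0,1,0)$. For ``no three collinear'' I split the three chosen points by how many lie at infinity. A line through $(1,0,0)$ has equation $\beta y+\gamma z=0$ and a line through $(0,1,0)$ has equation $\alpha x+\gamma z=0$, so either can contain two affine points only if those share a second (resp.\ first) coordinate, which is impossible since $f$ is a permutation. The line $z=0$ through both infinite points contains no affine point. The remaining all-affine case is excluded exactly by (2) via the dictionary above.

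For the forward direction I would first move an arbitrary hyperoval $H$ into normal form using the action of $\mathrm{PGL}(3,q)$: pick a secant of $H$ and a transformation carrying it to $z=0$ with its two hyperoval points sent to $(1,0,0)$ and $(0,1,0)$, so the remaining $q$ points become affine. Counting the pencil through $(1,0,0)$ (each of its $q+1$ lines meets $H$ in one further point, and exactly one of them is $z=0$) shows the second coordinates of the affine points exhaust $\F_q$; the symmetric count through $(0,1,0)$ shows the first coordinates are pairwise distinct. Hence the affine points are $\{(f(c),c,1):c\in\F_q\}$ for a bijection $f$, represented by its unique reduction modulo $x^q-x$ of degree $<q$. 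The stabilizer of $(1,0,0)$, $(0,1,0)$ and $z=0$ still allows independent affine scalings and translations of the two coordinates, which I would use to normalize $f(0)=0$ and $f(1)=1$; finally the absence of three collinear points in $H$ yields (2) through the dictionary.

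The main obstacle is the dictionary itself: recognizing the secant-slope expression as $g_a$ and correctly handling the exceptional input $x=0$, where $x^{q-2}$ forces $g_a(0)=0$, so that ``injective on $\F_q^*$'' upgrades to ``permutation of $\F_q$.'' The only other delicate point is checking that the stabilizer subgroup acts with enough freedom to realize both $f(0)=0$ and $f(1)=1$ simultaneously without leaving the prescribed form $(f(c),c,1)$.
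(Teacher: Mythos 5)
Your proposal is correct. Note first that the paper itself offers no proof of this statement: Theorem~\ref{ovalpoly} is quoted from the literature (the o-polynomial characterization of hyperovals), so there is no ``paper proof'' to compare against; your argument is essentially the standard one and it holds together. The dictionary is the right engine: for distinct $b,c,d$ the collinearity determinant in characteristic $2$ becomes the equality of secant slopes $\frac{f(b)+f(d)}{b+d}=\frac{f(c)+f(d)}{c+d}$, and with $x^{q-2}$ read as inversion on $\F_q^*$ extended by $0\mapsto 0$ this is exactly a collision of $g_d$ on $\F_q^*$; since $g_d(0)=0$ and injectivity of $f$ forces $g_d(\F_q^*)\subseteq\F_q^*$, injectivity on $\F_q^*$ is equivalent to $g_d$ being a permutation of $\F_q$. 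Your case split for the converse (zero, one, or two of the three chosen points at infinity) is complete, and the forward direction correctly uses $2$-transitivity of $\mathrm{PGL}(3,q)$ to put a secant at $z=0$ with its two hyperoval points at $(1,0,0)$ and $(0,1,0)$, the pencil counts through the two infinite points to see that the affine part is a graph of a bijection, and the stabilizer maps $(X,Y,Z)\mapsto(aX+bZ,\,cY+dZ,\,Z)$ with $a,c\neq 0$ to achieve $f(0)=0$, $f(1)=1$ (taking $a=(h(0)+h(1))^{-1}$, $b=ah(0)$ works, since $h(0)\neq h(1)$). Two small points worth making explicit if you write this up: the statement ``any hyperoval can be written in this form'' is to be read up to projective equivalence, which is precisely what your $\mathrm{PGL}(3,q)$ reduction supplies; and the degenerate case $q=2$, where $x^{q-2}=1$ rather than the inversion map, should be checked separately (it is trivial, as the only admissible $f$ is $f(x)=x$ and the four points form a frame).
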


Polynomials satisfying the conditions of Theorem \ref{ovalpoly} are called  {\it o-polynomials}.
The following is a list of known infinite families of o-polynomials of $\F_q$ with $q=2^{m}$  in literature \cite{Maschietti1998}.
 \begin{enumerate}
\item [(1)] The translation polynomial $f(x)=x^{2^{h}}$, where $\operatorname{gcd}(h, m)=1$.
\item [(2)] The Segre polynomial $f(x)=x^{6}$, where $m$ is odd.
\item [(3)] The Glynn oval polynomial $f(x)=x^{3 \times 2^{(m+1) / 2}+4}$, where $m$ is odd.
\item [(4)] The Glynn oval polynomial $f(x)=x^{2^{(m+1) / 2}+2^{(m+1) / 4}}$ for $m \equiv 3~(\bmod ~4)$.
\item [(5)] The Glynn oval polynomial $f(x)=x^{2^{(m+1) / 2}+2^{(3 m+1) / 4}}$ for $m \equiv 1 ~(\bmod ~4)$.
\item [(6)] The Cherowitzo oval polynomial $f(x)=x^{2^{e}}+x^{2^{e}+2}+x^{3 \times 2^{e}+4}$, where $e=(m+1) / 2$ and $m$ is odd.
\item [(7)] The Payne oval polynomial $f(x)=x^{\frac{2^{m-1}+2}{3}}+x^{2^{m-1}}+x^{\frac{3 \times 2^{m-1}-2}{3}}$, where $m$ is odd.
\item [(8)] The Subiaco polynomial
$$
f_{a}(x)=\left(a^{2}\left(x^{4}+x\right)+a^{2}\left(1+a+a^{2}\right)\left(x^{3}+x^{2}\right)\right)\left(x^{4}+a^{2} x^{2}+1\right)^{2^{m}-2}+x^{2^{m-1}},
$$
where $\operatorname{Tr}_{q / 2}(1 / a)=1$ and $a \notin \F_4$ if $m \equiv 2 \bmod 4$.
\item [(9)] The Adelaide oval polynomial
$$
f(x)=\frac{T\left(\beta^{m}\right)(x+1)}{T(\beta)}+\frac{T\left(\left(\beta x+\beta^{q}\right)^{m}\right)}{T(\beta)\left(x+T(\beta) x^{2^{m-1}}+1\right)^{m-1}}+x^{2^{m-1}},
$$
where $m \geq 4$ is even, $\beta \in \F_{q^2} \backslash\{1\}$ with $\beta^{q+1}=1, m \equiv \pm(q-1) / 3~(\bmod ~q+1)$, and $T(x)=x+x^{q}$.
 \end{enumerate}

An useful  property on o-polynomials is also presented as follows.

\begin{lemma}[\cite{Maschietti1998}]\label{2-to-1}
 A polynomial $f$ over $\F_q$ with $f(0)=0$ is an oval polynomial if and only if $f_{u}:=f(x)+u x$ is 2-to-1 for every $u \in \F_q^{*}$.
\end{lemma}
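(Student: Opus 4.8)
The plan is to translate the algebraic notion of an o-polynomial into the geometry of $\mathcal{HO}(f)$ and back, so I can work with the single clean condition ``no three points of $\mathcal{HO}(f)$ are collinear.'' By Theorem \ref{ovalpoly} this collinearity-freeness is equivalent to $f$ being an oval polynomial, so I may use it freely. First I would dispose of the two special points by $3\times3$ determinants. The triple $(1,0,0),(f(a),a,1),(f(b),b,1)$ has determinant $a+b$, never zero for $a\neq b$, so $(1,0,0)$ is harmless. The triple $(0,1,0),(f(a),a,1),(f(b),b,1)$ has determinant $f(a)+f(b)$, so no collinearity at $(0,1,0)$ says exactly that $f$ is injective, i.e. a permutation of $\F_q$. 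Finally three affine points $(f(a),a,1),(f(b),b,1),(f(c),c,1)$ are collinear iff $f(a)(b+c)+f(b)(a+c)+f(c)(a+b)=0$. Thus the whole o-polynomial condition reduces to: $f$ is a bijection and this expression is nonzero for all distinct $a,b,c$.

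The second step introduces the slope $s(a,b):=(f(a)+f(b))/(a+b)$ for $a\neq b$. A short computation (using $2a=0$) gives the identity $f(a)(b+c)+f(b)(a+c)+f(c)(a+b)=(a+b)(a+c)\bigl(s(a,b)+s(a,c)\bigr)$, so three affine points are collinear iff $s(a,b)=s(a,c)$; equivalently, no three are collinear iff for each fixed $a$ the map $b\mapsto s(a,b)$ is injective on $\F_q\setminus\{a\}$. The bridge to the statement is that $s(a,b)=u \iff f_u(a)=f_u(b)$, where $f_u(x)=f(x)+ux$, since $f(a)+ua=f(b)+ub$ rearranges to $f(a)+f(b)=u(a+b)$. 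Hence, for fixed $u\in\F_q^*$, the pairs $\{a,b\}$ with $f_u(a)=f_u(b)$ are precisely the pairs of slope $u$, and the no-three-collinear condition says exactly that every fibre of $f_u$ has size at most $2$.

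The crux, and the step I expect to be the main obstacle, is to upgrade ``every $f_u$ is at most $2$-to-$1$'' to ``exactly $2$-to-$1$,'' and conversely to recover the injectivity of $f$ from the $2$-to-$1$ hypothesis. I would use a double count. Each unordered pair $\{a,b\}$ with $f(a)\neq f(b)$ has a unique slope in $\F_q^*$, so summing the number of collision pairs of $f_u$ over all $u\in\F_q^*$ yields $\binom{q}{2}-Z$, where $Z$ counts the pairs with $f(a)=f(b)$. For the forward direction $f$ is a bijection, so $Z=0$ and the sum equals $\binom{q}{2}=(q-1)\cdot\tfrac{q}{2}$; since each $f_u$ has at most $q/2$ collision pairs (all fibres $\le2$ and $q$ even), every summand must attain its maximum $q/2$, forcing each $f_u$ to be exactly $2$-to-$1$. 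For the converse, assuming each $f_u$ with $u\in\F_q^*$ is $2$-to-$1$ makes every summand equal to $q/2$, so the sum is $\binom{q}{2}$ and hence $Z=0$, recovering that $f$ is a permutation, while the at-most-$2$-to-$1$ property restores no-three-collinear. The residual hypotheses of Theorem \ref{ovalpoly} ($\deg f<q$ and $f(1)=1$) are pure normalisations: reducing $f$ modulo $x^q-x$ and rescaling by $f(1)^{-1}$, a diagonal collineation that preserves both the arc and the $2$-to-$1$ property (as $f_u\mapsto f(1)^{-1}f_{f(1)u}$), so they may be assumed without loss of generality. The only delicate point is that $q$ is even, which is exactly what makes the extremal count $(q-1)q/2$ coincide with $\binom{q}{2}$ and makes ``$2$-to-$1$'' the saturation of the fibre-size bound.
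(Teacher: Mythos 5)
The paper offers no proof of this lemma --- it is imported verbatim from \cite{Maschietti1998} --- so there is nothing internal to compare against; your argument stands on its own, and it is correct. The two $3\times 3$ determinants correctly reduce the hyperoval condition to ``$f$ injective plus no three affine points collinear,'' the identity $f(a)(b+c)+f(b)(a+c)+f(c)(a+b)=(a+b)(a+c)\bigl(s(a,b)+s(a,c)\bigr)$ checks out in characteristic $2$, and the equivalence $s(a,b)=u\iff f_u(a)=f_u(b)$ correctly identifies the collision pairs of $f_u$ with the slope-$u$ secants. Your double count over $u\in\F_q^*$ is a legitimate way to upgrade ``every fibre of $f_u$ has size at most $2$, and $f$ is injective'' to ``every fibre has size exactly $0$ or $2$'' and back; the more standard route instead observes that the line $X+uY+cZ=0$ meets the $q$ affine points of $\mathcal{HO}(f)$ in exactly $\#f_u^{-1}(c)$ points and invokes the fact that a $(q+2)$-arc in PG$(2,q)$ with $q$ even has no tangent lines, which yields the $0$-or-$2$ fibre condition directly --- but that tangent-line fact itself rests on a counting argument, so the two proofs do the same combinatorial work in different places. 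The one caveat, which you already flag, is that the biconditional is literally false for the unnormalised notion of o-polynomial (all $f_u$ can be $2$-to-$1$ while $f(1)\ne 1$); your disposal of this via reduction mod $x^q-x$ and the diagonal collineation replacing $f$ by $f(1)^{-1}f$ (legitimate since injectivity and $f(0)=0$ force $f(1)\ne 0$) is the right fix and matches the convention under which the lemma is always quoted.
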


\section{Two families of $[q+5,3,q+2]$ NMDS codes}\label{Sec: NMDS}

In this section, let $q=p^m$ with prime $p$ and positive integer $m$. We will present two constructions of $[q+5,3,q+2]$ NMDS codes according to $p=2$ or not, respectively. Their weight distributions are also determined.
This section will divided into two parts: one is based on the hyperovals in PG$(2,2^m)$, while another is based on an oval in PG$(2,p^m)$. First let
$\F_q=\{\alpha_1,\ldots,\alpha_{q-2},\alpha_{q-1},\alpha_{q}=0\}$.

\subsection{NMDS codes from hyperovals}

Let $q=2^m$.
Define $f\in\F_q[x]$ to be an o-polynomial  and  an element $v \in \F_q \setminus \{f(x)+x : x \in \F_q\}.$ By Lemma \ref{2-to-1}, there are $q/2$ choices of such $v$. Define a $3\times(q+5)$ matrix $G_v$ by
\begin{align}\label{generator}
G_v&=({\bf g}_1,\ldots,{\bf g}_{q+2},{\bf g}_{q+3},{\bf g}_{q+4},,{\bf g}_{q+5})	\notag\\
&=\left(\begin{array}{llllllllll}
f(\alpha_{1}) & f(\alpha_{2}) & \cdots & f(\alpha_{q-1})   & f(\alpha_{q}) & 1 & 0 & 1 & 0 & v\\
\alpha_{1} & \alpha_{2} & \cdots & \alpha_{q-1}            & \alpha_{q}    & 0 & 1 & 1 & v & 0\\
1 & 1 & \cdots & 1                                         & 1             & 0 & 0 & 0 & 1 & 1
\end{array}\right).
\end{align}
Obviously by Theorem \ref{ovalpoly}, the first $q+2$ columns of $G_v$ form a hyperoval in PG$(2,q)$. Let ${\cal C}_v$ be the linear code over $\F_q$ with the generator matrix $G_v$. We want to determine the parameters and weight enumerator of ${\cal C}_v$.

Firstly, we give the following results which will be used later.

\begin{lemma}\label{p even-lem}
Let $f\in\F_q[x]$ be an o-polynomial, $v \in \F_q \setminus \{f(x)+x : x \in \F_q\}$ and  $u_1,u_2\in\F_q^*$.  Define
\begin{align*}
	A_1(u_1,u_2)&=\#\{x\in\F_q:u_1f(x)+u_2x+u_2v=0\},\\
	A_2(u_1,u_2)&=\#\{x\in\F_q:u_1f(x)+u_2x+u_1v=0\}.
\end{align*}
Then $A_i(u_1,u_2)\in\{0,2\}$ for $i=1,2$. Especially, $A_i(u,u)=0$ for any $u\in \F_q^*$, and
\begin{equation*}
	\#\{(u_1,u_2)\in(\F_q^*)^2:A_i(u_1,u_2)=2\}=(q-1)(q-2)/2.
\end{equation*}
\end{lemma}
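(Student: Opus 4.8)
The plan is to reduce every count to the $2$-to-$1$ property of the maps $f_w(x)=f(x)+wx$ furnished by Lemma \ref{2-to-1}, and then to obtain the enumeration by a single double-counting argument. First I would rewrite the defining equations. Since $u_1\neq0$ and $q=2^m$ has characteristic $2$, dividing the equation for $A_1(u_1,u_2)$ by $u_1$ and setting $w=u_2/u_1\in\F_q^*$ turns it into $f(x)+wx=wv$, i.e. $f_w(x)=wv$; likewise the equation for $A_2(u_1,u_2)$ becomes $f_w(x)=v$. Because $f$ is an o-polynomial, Lemma \ref{2-to-1} guarantees that $f_w$ is $2$-to-$1$ for every $w\in\F_q^*$, so each fibre $f_w^{-1}(c)$ has size $0$ or $2$. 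This at once yields $A_i(u_1,u_2)\in\{0,2\}$ for $i=1,2$.

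For the vanishing on the diagonal, note that $u_1=u_2=u$ forces $w=1$, so both equations collapse to $f(x)+x=v$, that is, a solution would mean $v\in\{f(x)+x:x\in\F_q\}$. Since $v$ was chosen outside this image set, there is no solution and $A_i(u,u)=0$. It is worth recording here that $0=f(0)+0$ lies in this image, so $v\neq0$; this fact is needed below.

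For the enumeration I would count, for each fixed $i$, the set of triples $T_i=\{(u_1,u_2,x)\in(\F_q^*)^2\times\F_q:\text{the $i$-th equation holds}\}$ in two ways. Summing over $(u_1,u_2)$ and using $A_i\in\{0,2\}$ gives $\#T_i=2\cdot\#\{(u_1,u_2):A_i(u_1,u_2)=2\}$, so it suffices to show $\#T_i=(q-1)(q-2)$. Summing instead over $x$, for fixed $x$ each equation is linear and homogeneous in $(u_1,u_2)$, say $a(x)u_1+b(x)u_2=0$; the number of solutions with $u_1,u_2\in\F_q^*$ is $q-1$ when both coefficients are nonzero and $0$ when exactly one of them vanishes. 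Using that $f$ is a permutation polynomial with $f(0)=0$ (Theorem \ref{ovalpoly}), so $f(x)=0$ only at $x=0$, together with $v\neq0$: for $A_1$ the coefficients are $f(x)$ and $x+v$, both nonzero precisely for $x\notin\{0,v\}$ ($q-2$ values), while for $A_2$ they are $f(x)+v$ and $x$, both nonzero precisely for $x\notin\{0,f^{-1}(v)\}$ (again $q-2$ values, since $f^{-1}(v)\neq0$). In each case $\#T_i=(q-2)(q-1)$, whence the asserted $(q-1)(q-2)/2$.

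The main obstacle I anticipate is the bookkeeping in the degenerate cases of this linear system: one must rule out that both coefficients $a(x),b(x)$ vanish for some $x$, which would contribute the larger count $(q-1)^2$ and destroy the tally. This is exactly where the permutation property $f(0)=0$ and the choice $v\neq0$ are indispensable, since $a(x)=b(x)=0$ would force $x=0$ together with $v=0$ (for $A_1$) or $f(x)=v$ together with $x=0$, i.e. $v=f(0)=0$ (for $A_2$), both impossible. Once these degeneracies are excluded, everything else is routine arithmetic in $\F_q$.
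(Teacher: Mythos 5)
Your proof is correct and follows essentially the same route as the paper's: both reduce $A_i\in\{0,2\}$ to the 2-to-1 property of $f(x)+wx$ from Lemma \ref{2-to-1}, exclude the degenerate values $x\in\{0,v\}$ (resp.\ $x\in\{0,f^{-1}(v)\}$), and then count the $q-2$ remaining $x$'s against the $(q-1)$ scalings of each admissible ratio $u_1/u_2$, with the factor $1/2$ coming from solutions occurring in pairs. Your packaging of the final count as a double count over triples, with the explicit check that the two linear coefficients never vanish simultaneously, is a slightly more careful write-up of the same argument.
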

\proof We first prove the results for the first equation
\begin{equation}\label{p even-lem-eq1}
	u_1f(x)+u_2x+u_2v=0,~{\rm for}~ u_1,u_2\in\F_q^*.
\end{equation}

By Lemma \ref{2-to-1}, the function $u_1f(x)+u_2x$ is 2-to-1 for any given $u_1,u_2\in\F_q^*$. Thus $A_1(u_1,u_2)\in\{0,2\}$.
This means that if (\ref{p even-lem-eq1}) has solutions, they appear in pairs. The result $A_i(u,u)=0$ is obvious since $v \in \F_q \setminus \{f(x)+x : x \in \F_q\}$.

Note that the choice of $v$ implies that 0 and $v$ can not be solutions of
  (\ref{p even-lem-eq1}) since $v\neq0$ and $f(v)\neq0$. Thus when $x$ runs through $\F_q\setminus\{0,v\}$, $\frac{u_1}{u_2}=\frac{v+x}{f(x)}$ takes exactly $(q-2)/2$ different values.  Conversely, if $\frac{u_1}{u_2}$ takes one of such  $(q-2)/2$  values,  (\ref{p even-lem-eq1}) has exactly two solutions. Therefore, there are exactly $(q-1)(q-2)/2$ choices of $(u_1,u_2)$ such that (\ref{p even-lem-eq1}) has exactly two solutions.

The proof for the results of the second equation is almost the same, so the details are omitted. We only need to notice that 0 and $f^{-1}(v)$ can not be solutions of the second equation.
\qed

\begin{theorem}\label{q+5}
Let $q=2^m$. The linear code ${\cal C}_v$ is an $[q+5,3,q+2]$ NMDS code over $\F_q$ with weight enumerator
$$
A(z)=1 + \frac{(q-1)(3q+8)}{2} z^{q+2} + \frac{(q-1)(q+2)(q-2)}{2} z^{q+3} +  \frac{3(q-1)(q-2)}{2} z^{q+4} + \frac{(q-1)(q-2)^{2}}{2} z^{q+5}.
$$

%$$公式太长可以分段：
%\begin{aligned}
%A(z)=1 + \frac{(q-1)(3q+8)}{2} z^{q+2} + \frac{(q-1)(q+2)(q-2)}{2} z^{q+3} &+  \frac{3(q-1)(q-2)}{2} z^{q+4}\\
%&+ \frac{(q-1)(q-2)^{2}}{2} z^{q+5}.
%\end{aligned}
%$$
\end{theorem}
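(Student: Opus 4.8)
The plan is to work entirely in $\mathrm{PG}(2,q)$ via Lemma \ref{Weight-lem1}, identifying the columns of $G_v$ with the point-set $S_G=\mathcal{HO}(f)\cup N$, where $\mathcal{HO}(f)$ is the hyperoval given by the first $q+2$ columns and $N=\{(1,1,0),\,(0,v,1),\,(v,0,1)\}$ are the three ``new'' points. The argument then splits into three stages: first show $S_G$ is a $(q+5,3)$-arc, so that $\mathcal{C}_v$ is $[q+5,3,q+2]$ NMDS by the equivalence of Section \ref{introduction}; next compute the number $A_{q+2}=A_{n-k}$ of minimum-weight codewords by counting $3$-secant lines of $S_G$; finally recover the full enumerator from Lemma \ref{weight}. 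Throughout I use two facts about the chosen $v$: since $0=f(0)+0$ we have $v\neq0$, and by construction $v\notin\{f(x)+x:x\in\F_q\}$.

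\textbf{Stage 1 (the arc).} Distinctness of the $q+5$ points is a direct check (the only nontrivial cases, $(0,v,1),(v,0,1)\notin\mathcal{HO}(f)$, both reduce to $v\neq0$). The three points of $N$ are collinear: writing $Q_1=(0,v,1)$ and $Q_2=(v,0,1)$, both satisfy $x+y+vz=0$, and so does $(1,1,0)$; call this line $L_0$. This gives property $N2'$. Moreover $L_0$ contains no point of $\mathcal{HO}(f)$: an affine point $(f(c),c,1)$ would force $f(c)+c=v$, impossible since $v\notin\{f(x)+x\}$, while $(1,0,0),(0,1,0)\notin L_0$. For $N3'$ I invoke the fact recorded in Section \ref{Section2} that a hyperoval meets every line in $0$ or $2$ points. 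Writing $\#(L\cap S_G)=h+e$ with $h=\#(L\cap\mathcal{HO}(f))\in\{0,2\}$ and $e=\#(L\cap N)$, the collinearity of $N$ forces $e\in\{0,1,3\}$ (a line meeting two points of $N$ equals $L_0$, hence meets all three), and $e=3$ occurs only for $L=L_0$, where $h=0$. Hence $\#(L\cap S_G)\le3$ for every line, which is $N3'$. Thus $S_G$ is a $(q+5,3)$-arc and $\mathcal{C}_v$ is $[q+5,3,q+2]$ NMDS.

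\textbf{Stage 2 (minimum-weight count).} By (\ref{weight-eq1}) and the fact that each line of $\mathrm{PG}(2,q)$ corresponds to $q-1$ nonzero vectors $\mathbf{u}$, we have $A_{q+2}=A_{n-k}=(q-1)\,t_3$, where $t_3$ is the number of $3$-secants of $S_G$. From the decomposition above, a line is a $3$-secant exactly when $(h,e)=(0,3)$ or $(h,e)=(2,1)$. The first case yields only $L_0$, contributing $1$. For the second, each point of $N$ is external to $\mathcal{HO}(f)$, so lies on exactly $(q+2)/2$ secants of the hyperoval (the $q+1$ lines through an external point partition the $q+2$ hyperoval points into $2$-secants); since no $2$-secant can pass through two points of $N$ (that line would be $L_0$, which is external), these three families are disjoint and contribute $3(q+2)/2$. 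Therefore $t_3=1+3(q+2)/2=(3q+8)/2$, whence $A_{q+2}=(q-1)(3q+8)/2$, as claimed.

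\textbf{Stage 3 (full enumerator).} Since $\mathcal{C}_v$ is NMDS, its weight distribution is determined by $A_{n-k}$; substituting $n=q+5$, $k=3$ and $A_{q+2}=(q-1)(3q+8)/2$ into Lemma \ref{weight} (for $s=1,2,3$) produces $A_{q+3}$, $A_{q+4}$, $A_{q+5}$ after a routine simplification, matching the stated enumerator. I expect the main obstacle to be the bookkeeping in Stages 1--2: one must pin down that the three new points are collinear and that their common line $L_0$ is external to the hyperoval, and then deploy this both to bound every line-intersection by $3$ and to guarantee the disjointness of the three secant families in the count of $t_3$ --- all of which rests on $v\notin\{f(x)+x\}$. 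As a cross-check, $t_3$ can alternatively be obtained purely algebraically by enumerating $\mathbf{u}=(u_1,u_2,u_3)$ and applying Lemma \ref{p even-lem} to count the points $(f(c),c,1)$ lying on each candidate $3$-secant.
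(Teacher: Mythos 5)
Your proof is correct, and it reaches the same count $A_{q+2}=(q-1)(3q+8)/2$ by a genuinely different and cleaner route than the paper. The paper fixes ${\bf u}=(u_1,u_2,u_3)$, splits into cases according to which of the three added points lie on ${\cal H}_{\bf u}$, and in each case counts solutions of equations such as $u_1f(x)+u_2x+u_2v=0$ via its Lemma \ref{p even-lem} (the $2$-to-$1$ property of $f(x)+ux$). You instead isolate the two geometric facts that drive everything: the three new points $(1,1,0),(0,v,1),(v,0,1)$ are collinear on $L_0:\,x+y+vz=0$, and $L_0$ is external to the hyperoval precisely because $v\notin\{f(x)+x\}$. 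From $h\in\{0,2\}$ and $e\in\{0,1,3\}$ the bound $\#(L\cap S_G)\le 3$ is immediate, and the $3$-secant count reduces to the standard fact that a point off a hyperoval lies on exactly $(q+2)/2$ bisecants, giving $t_3=1+3(q+2)/2=(3q+8)/2$; your disjointness check (no bisecant through two points of $N$, since that line would be the external line $L_0$) is exactly the point that makes the three families additive. Your approach buys brevity and makes Lemma \ref{p even-lem} unnecessary for this theorem; the paper's explicit case analysis buys a concrete description of which lines are the $3$-secants, which it reuses later in the locality argument of Theorem \ref{locality-even}. One small presentational gap: in Stage 1 you should note explicitly that $G_v$ has rank $3$ (immediate, since a hyperoval contains a frame), so that the code really has dimension $3$ and the arc--code dictionary applies; otherwise the argument is complete.
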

\proof
Let $S_v$ be the set of columns of $G_v$. In order to prove that ${\cal C}_v$ is an NMDS codes, we only need to prove that $S_v$ is an $(n,3)$-arc in PG$(2,q)$.

It's easy to find that any two points of $S_v$ generate a line in PG$(2,q)$, and there exist three collinear points, such as $\{{\bf g}_{q+1},{\bf g}_{q+2},{\bf g}_{q+3}\}$. Now we will prove that no four points of $S_v$ lie on a line, i.e., $\#({\cal H}_{\bf u}\cap S_v)\leq 3$ for any line ${\cal H}_{\bf u}$ with nonzero ${\bf u}=(u_1,u_2,u_3)\in \F_q^3$.

Since $S_v'=\{{\bf g}_1,\ldots,{\bf g}_{q+2}\}$ is a hyperoval, which means  $\#({\cal H}_{\bf u}\cap S_v')\leq 2$ for any line ${\cal H}_{\bf u}$. Thus we only consider the intersection number $\#({\cal H}_{\bf u}\cap S_v)$ for any line ${\cal H}_{\bf u}$ which is incident with some of $\{{\bf g}_{q+3},{\bf g}_{q+4},{\bf g}_{q+5}\}$. The rest of the proof will be divided into three cases:

{\bf Case 1}. ${\bf g}_{q+3}\in{\cal H}_{\bf u}$, i.e., $u_1+u_2=0$, which implies that ${\bf u}=(u_1,u_1,u_3)$.
\begin{itemize}
\item	If $u_1=0$, then ${\bf u}=(0,0,u_3)$ for $u_3\in\F_q^*$. Obviously ${\cal H}_{\bf u}\cap S_v=\{{\bf g}_{q+1},{\bf g}_{q+2},{\bf g}_{q+3}\}$.
\item If $u_1\neq0$ and $u_3=u_1v$, then ${\bf u}=(u_1,u_1,u_1v)$. It's easy to obtain that $\{{\bf g}_{q+3},{\bf g}_{q+4},{\bf g}_{q+5}\}\in{\cal H}_{\bf u}$. The choice of $v$ makes that $f(x)+x+v=0$ has no solution in $\F_q$, thus ${\cal H}_{\bf u}\cap S_v=\{{\bf g}_{q+3},{\bf g}_{q+4},{\bf g}_{q+5}\}$.

%\item If $u_1\neq0$ and $u_3\in \F_q\setminus \{u_1v\}$ satisfying that $f(x)+x+u_1^{-1}u_3=0$ have solutions in $\F_q$. Since that $f(x)+x$ is 2-to-1 by Lemma \ref{2-to-1}, it has exactly two solutions $\{\alpha_i,\alpha_j\}$. Thus ${\cal H}_{\bf u}\cap S_v=\{{\bf g}_{i},{\bf g}_{j},{\bf g}_{q+3}\}$. There are $q/2$ choices of such $u_3$ for given $u_1$ since $f(x)+x$ is 2-to-1.
%\item If $u_1\neq0$ and $u_3\in \F_q\setminus \{u_1v\}$ satisfying that $f(x)+x+u_1^{-1}u_3=0$ have no solutions in $\F_q$. Now ${\cal H}_{\bf u}\cap S_v=\{{\bf g}_{q+3}\}$.

	\item If $u_1\neq0$ and $u_3\in \F_q\setminus \{u_1v\}$, then ${\bf g}_k\notin{\cal H}_{\bf u}$ for $k\in\{q+1,q+2,q+4,q+5\}$. Now we consider the equation $f(x)+x+u_1^{-1}u_3=0$.

	\begin{itemize}
	\item If $f(x)+x+u_1^{-1}u_3=0$ has solutions in $\F_q$. Since that $f(x)+x$ is 2-to-1 by Lemma \ref{2-to-1}, it has exactly two solutions $\{\alpha_{i_1},\alpha_{j_1}\}$. Thus ${\cal H}_{\bf u}\cap S_v=\{{\bf g}_{i_1},{\bf g}_{j_1},{\bf g}_{q+3}\}$. There are $q(q-1)/2$ choices of $(u_1,u_3)$ satisfying that  $u_1^{-1}u_3\in \{f(x)+x:x\in\F_q\}$.

	\item If $f(x)+x+u_1^{-1}u_3=0$ has no solution in $\F_q$, then ${\cal H}_{\bf u}\cap S_v=\{{\bf g}_{q+3}\}$.
	\end{itemize}

\end{itemize}

 Thus in this case, the number of ${\bf u}\in\F_q^3$ satisfying $\#({\cal H}_{\bf u}\cap S_v)=3$ is exactly
\[(1+1+q/2)(q-1)=(q+4)(q-1)/2.\]

{\bf Case 2}. ${\bf g}_{q+3}\notin{\cal H}_{\bf u}$ and ${\bf g}_{q+4}\in{\cal H}_{\bf u}$. Then ${\bf u}=(u_1,u_2,u_2v)$ with $u_1\neq u_2$.

\begin{itemize}
	\item If $u_2=0$, then ${\bf u}=(u_1,0,0)$ for $u_1\in\F_q^*$. Obviously ${\cal H}_{\bf u}\cap S_v=\{{\bf g}_{q},{\bf g}_{q+2},{\bf g}_{q+4}\}$.
	\item If $u_2\neq0, u_1=0$, then ${\bf u}=(0,u_2,u_2v)$ for $u_2\in\F_q^*$.  It's easy to obtain that ${\cal H}_{\bf u}\cap S_v=\{{\bf g}_{i},{\bf g}_{q+1},{\bf g}_{q+4}\}$, where $i$ is the unique number such that $\alpha_i=v$.
	\item If $u_2\neq0$, $u_1\neq0$, then ${\bf g}_k\notin{\cal H}_{\bf u}$ for $k\in\{q+1,q+2,q+3,q+5\}$ because of $u_1\neq u_2$. Consider the equation $u_1f(x)+u_2x+u_2v=0$.
	\begin{itemize}
	%\item If $u_1f(x)+u_2x+u_2v=0$ has exactly two solutions $\{x_1,x_2\}$ in $\F_q$, then ${\cal H}_{\bf u}\cap S_v=\{{\bf g}_{i},{\bf g}_{j},{\bf g}_{q+4}\}$, where $i,j$ are the unique numbers such that $\alpha_i=x_1,\alpha_j=x_2$. By Lemma \ref{p even-lem}, there are $(q-1)(q-2)/2$ choices of such $(u_1,u_2)$.
	\item If $u_1f(x)+u_2x+u_2v=0$ has exactly two solutions $\{\alpha_{i_2},\alpha_{j_2}\}$ in $\F_q$, then ${\cal H}_{\bf u}\cap S_v=\{{\bf g}_{i_2},{\bf g}_{j_2},{\bf g}_{q+4}\}$. There are $(q-1)(q-2)/2$ choices of such $(u_1,u_2)$ by Lemma \ref{p even-lem}.
	\item If the function $u_1f(x)+u_2x+u_2v=0$ has no solution in $\F_q$, then ${\cal H}_{\bf u}\cap S_v=\{{\bf g}_{q+4}\}$.
	\end{itemize}

\end{itemize}

%If the function $u_1f(x)+u_2x=u_2v$ has exactly two solutions $\{x_1,x_2\}$ in $\F_q$, then  ${\cal H}_u\cap S_v=\{{\bf g}_{i},{\bf g}_{j},{\bf g}_{q+4}\}$, where $i,j$ are the unique numbers such that $\alpha_i=x_1,\alpha_j=x_2$. The choice of $v$ makes sure that $x_1,x_2\in\F_q\setminus \{0,v\}$. Since $f(x)+\frac{u_2}{u_1}x$ is 2-to-1, when $x$ runs through $\F_q\setminus\{0,v\}$, $\frac{u_2}{u_1}$ takes exactly $(q-2)/2$ different values.
%
%If the function $u_1f(x)+u_2x=u_2v$ has no solutions in $\F_q$, then ${\cal H}_u\cap S_v=\{{\bf g}_{q+4}\}$.

In this case, the number of ${\bf u}\in\F_q^3$ satisfying $\#({\cal H}_{\bf u}\cap S_v)=3$ is exactly
\[(1+1+(q-2)/2)(q-1)=(q+2)(q-1)/2.\]

{\bf Case 3}. ${\cal H}_{\bf u}\cap \{{\bf g}_{q+3},{\bf g}_{q+4},{\bf g}_{q+5}\}={\bf g}_{q+5}$. Then ${\bf u}=(u_1,u_2,u_1v)$ with $u_1\neq u_2$.
\begin{itemize}
	\item If $u_1=0$, then ${\bf u}=(0,u_2,0)$ for $u_2\in\F_q^*$. Obviously ${\cal H}_{\bf u}\cap S_v=\{{\bf g}_{q},{\bf g}_{q+1},{\bf g}_{q+5}\}$.
	\item If $u_1\neq0, u_2=0$, then ${\bf u}=(u_1,0,u_1v)$ for $u_1\in\F_q^*$.  It's easy to obtain that ${\cal H}_{\bf u}\cap S_v=\{{\bf g}_{j},{\bf g}_{q+2},{\bf g}_{q+5}\}$, where $j$ is the unique number such that $f(\alpha_j)=v$.
	\item If $u_1\neq0$, $u_2\neq0$, then ${\bf g}_k\notin{\cal H}_{\bf u}$ for $k\in\{q+1,q+2,q+3,q+4\}$ because of $u_1\neq u_2$. Consider the equation $u_1f(x)+u_2x+u_1v=0$.
    \begin{itemize}
   % \item If $u_1f(x)+u_2x+u_1v=0$ has exactly two solutions $\{x_1',x_2'\}$ in $\F_q$, then  ${\cal H}_{\bf u}\cap S_v=\{{\bf g}_{i'},{\bf g}_{j'},{\bf g}_{q+4}\}$, where $i',j'$ are the unique numbers such that $\alpha_{i'}=x_1',\alpha_{j'}=x_2'$. By Lemma \ref{p even-lem}, there are $(q-1)(q-2)/2$ choices of such $(u_1,u_2)$.
    \item If $u_1f(x)+u_2x+u_1v=0$ has exactly two solutions $\{\alpha_{i_3},\alpha_{j_3}\}$ in $\F_q$, then  ${\cal H}_{\bf u}\cap S_v=\{{\bf g}_{i_3},{\bf g}_{j_3},{\bf g}_{q+4}\}$. There are $(q-1)(q-2)/2$ choices of such $(u_1,u_2)$ by Lemma \ref{p even-lem}.
    \item If $u_1f(x)+u_2x+u_1v=0$ has no solution in $\F_q$, then ${\cal H}_{\bf u}\cap S_v=\{{\bf g}_{q+5}\}$.
   \end{itemize}
\end{itemize}

In this case, the number of ${\bf u}\in\F_q^3$ satisfying $\#({\cal H}_{\bf u}\cap S_v)=3$ is exactly
\[(1+1+(q-2)/2)(q-1)=(q+2)(q-1)/2.\]

Based on the above cases, we prove that $\#({\cal H}_{\bf u}\cap S_v)\leq 3$ for any nonzero ${\bf u}\in\F_q^3$, therefore we obtain that $S_v$ is an $(q+5,3)$-arc in PG$(2,q)$, and ${\cal C}_v$ is an $[q+5,3,q+2]$ NMDS code over $\F_q$. In order to determine the weight distribution of the resultant NMDS code, we only need to compute $A_{q+2}$, which is equal to
\begin{align*}
	A_{q+2}&=\#\left\{{\bf u}\in \F_q^3\setminus\{{\bf0}\}:\#({\cal H}_{\bf u}\cap S_v)=3\right\}\\
	&=(q+4+q+2+q+2)(q-1)/2=(3q+8)(q-1)/2.
\end{align*}
By Lemma \ref{weight}, the weight enumerator of ${\cal C}_v$  can be completely determined.
\qed

Now we give an example to illustrate Theorem \ref{q+5}.

\begin{example}\label{ex1}

Let $q=2^2$, $\xi$ be a generator of $\F_{2^2}^*$ satisfying $\xi^2+\xi+1=0$. Choose the o-polynomial as $f(x)=x^2$ in Theorem \ref{q+5}, and $v=\xi$, then
%
%Let $q=2^4$, $\F_{2^4}=\F_2[x]/\langle x^4+x+1\rangle$, the oval polynomial $f(x)=x^2$. Let $\xi$ denote the generator of the multiplicative cyclic group $\F_{2^4}^*$. Then $ \F_{2^4} \setminus \{x^2+x : x \in \F_{2^4}\}= \{ \xi^3,\xi^6,\xi^7,\xi^9 ,\xi^{11}, \xi^{12}, \xi^{13},\xi^{14} \}.$ Without loss of generality, take $v=\xi^{9}$,
\begin{equation}\label{ex1-1}
G_v=\left(\begin{array}{lllllllll}
\xi& \xi^{2}& 1& 0 &1 &0& 1& 0 &\xi\\
\xi^{2}& \xi& 1 &0 &0 &1& 1& \xi& 0\\
1 &1 &1 &1& 0& 0& 0& 1 &1
\end{array}\right).
\end{equation}
By Magma \cite{Magma}, it is easy to check that
linear code $\cC_v$ generated by (\ref{ex1-1}) is a $[9,3,6]$  NMDS code with weight enumerator
$$
A(z)=1 + 30 z^{6} + 18 z^{7} +  9 z^{8} + 6 z^{9},
$$
which coincides with the conclusion of Theorem \ref{q+5}.
\end{example}

\subsection{NMDS codes from ovals}

Let $q=p^m$ with an odd prime $p$, and $\eta(x)$ be the quadratic character function of $\F_q$. The following are some basic knowledges of $\eta(x)$ \cite{Lidl1997}.

\begin{lemma}\label{quadratic-lem}
\begin{enumerate}
	\item $\eta(xy)=\eta(x)\eta(y)$ for any $x,y \in \F_q$;
	\item $\sum_{x \in \F_q} \eta(x)=0$;
	\item $\eta(-1)=
\begin{cases}
1,  & \text { if } q \equiv 1 ~(mod~4), \\
-1, & \text { if }q \equiv 3 ~(mod~4);
\end{cases}$
\item $\sum_{x \in \F_q} \eta(ax^2+bx+c)=-\eta(a)$ for any $a \in \F_q^{*}$, $b,c \in \F_q$.
\end{enumerate}	
\end{lemma}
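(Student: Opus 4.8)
The backbone of all four parts is the standard description of the quadratic character. Since $q$ is odd, $\F_q^{*}$ is cyclic of even order $q-1$; fixing a generator $g$, one has $\eta(g^{k})=(-1)^{k}$, and equivalently Euler's criterion $\eta(x)=x^{(q-1)/2}$ holds in $\{\pm1\}$ for every $x\in\F_q^{*}$ (with the convention $\eta(0)=0$). The plan is to deduce parts (1)--(3) directly from this description, and then to feed them into the evaluation of the quadratic character sum in part (4), so the order of proof is (1), (2), (3), (4).

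For part (1), multiplicativity is immediate from $\eta$ being the homomorphism $x\mapsto x^{(q-1)/2}$ on $\F_q^{*}$; the cases $x=0$ or $y=0$ give $0$ on both sides. For part (2), exactly $(q-1)/2$ elements of $\F_q^{*}$ are squares and $(q-1)/2$ are nonsquares, so $\sum_{x\in\F_q^{*}}\eta(x)=0$, and adjoining the term $\eta(0)=0$ leaves the sum unchanged. For part (3), Euler's criterion gives $\eta(-1)=(-1)^{(q-1)/2}$, which equals $1$ precisely when $(q-1)/2$ is even, i.e. $q\equiv1\pmod 4$, and $-1$ when $q\equiv3\pmod 4$.

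Part (4) is the one genuine computation. Since $\mathrm{char}\,\F_q\neq2$, I would complete the square: writing $ax^{2}+bx+c=a(x+\tfrac{b}{2a})^{2}+d$ with $d=c-\tfrac{b^{2}}{4a}$ and substituting $y=x+\tfrac{b}{2a}$ (a bijection of $\F_q$) reduces the sum to $\sum_{y}\eta(ay^{2}+d)$. Next I would use the square-counting identity $\#\{y\in\F_q:y^{2}=t\}=1+\eta(t)$, valid for all $t$, to pass to a sum over the values $t=y^{2}$:
\[
\sum_{y}\eta(ay^{2}+d)=\sum_{t\in\F_q}\bigl(1+\eta(t)\bigr)\eta(at+d)=\sum_{t}\eta(at+d)+\sum_{t}\eta(t)\eta(at+d).
\]
In the first sum $at+d$ runs over all of $\F_q$ as $t$ does (because $a\neq0$), so it vanishes by part (2). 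For the second sum, assuming $b^{2}-4ac\neq0$ (equivalently $d\neq0$), I factor $\eta(t)\eta(at+d)=\eta(a+dt^{-1})$ for $t\neq0$ using part (1) together with $\eta(t)^{2}=1$; as $t$ ranges over $\F_q^{*}$ the argument $a+dt^{-1}$ ranges over $\F_q\setminus\{a\}$, so this sum equals $\sum_{s\in\F_q}\eta(s)-\eta(a)=-\eta(a)$ by part (2). Combining the two sums gives $-\eta(a)$.

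The only real obstacle is part (4), and within it the careful bookkeeping of the bijection $s=a+dt^{-1}$ and the reuse of part (2); everything else is a direct consequence of the cyclic structure of $\F_q^{*}$. I would also flag the degenerate case $b^{2}-4ac=0$ (i.e. $d=0$): there $\sum_{y}\eta(ay^{2})=\eta(a)\sum_{y\neq0}\eta(y^{2})=(q-1)\eta(a)$, so the clean value $-\eta(a)$ genuinely requires the nonzero-discriminant hypothesis, which is the situation that arises in the later arguments.
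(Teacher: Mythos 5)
Your proof is correct. The paper does not actually prove this lemma; it only cites it from the Lidl--Niederreiter reference, and your argument is precisely the standard one found there (Euler's criterion for (1)--(3), completing the square plus the identity $\#\{y:y^2=t\}=1+\eta(t)$ for (4)). One point you raise deserves emphasis: part (4) as stated in the paper is false when $b^2-4ac=0$, where the sum equals $(q-1)\eta(a)$ rather than $-\eta(a)$, so the nonzero-discriminant hypothesis is genuinely needed. You are also right that this does not affect the paper: in every later application (the sums $\eta(4w^2+w)$, $\eta(\mu^2-4\mu w)$ with $w\neq 0$, and $\eta(\nu^2+4w)$ with $w\neq 0$) the discriminant is nonzero. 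No gaps.
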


Let $w\in \F_q$ satisfying $\eta(w)=\eta(1+4w)=-1$. Define a $3\times (q+5)$ matrix $G_w$ by
\begin{align}\label{generator2}
G_w&=({\bf g}_1,\ldots,{\bf g}_{q+1},{\bf g}_{q+2},{\bf g}_{q+3},{\bf g}_{q+4},{\bf g}_{q+5})	\notag\\
&=\left(\begin{array}{llllllllll}
\alpha_{1}^2 & \alpha_{2}^2 & \cdots & \alpha_{q-1}^2   & \alpha_{q}^2     & 1 & 0 & 1 & 0  & w\\
\alpha_{1} & \alpha_{2} & \cdots & \alpha_{q-1}            & \alpha_{q}    & 0 & 1 & 1 & w  & 0\\
1 & 1 & \cdots & 1                                         & 1             & 0 & 0 & 0 & -1 & 1
\end{array}\right).
\end{align}
Obviously, the first $q+1$ columns of $G_w$ form a oval in PG$(2,q)$. Let ${\cal C}_w$ be the linear code over $\F_q$ with the generator matrix $G_w$. We want to determine the parameters and weight enumerators of ${\cal C}_w$.

We first demonstrate the existence of such $w$ in $\F_q$.

\begin{lemma}\label{num2}
There are $\frac{q-2+\eta(-1)}{4}$ choices of such $w$ in $\F_q$.
\end{lemma}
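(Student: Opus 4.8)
The plan is to count $N = \#\{w \in \F_q : \eta(w) = \eta(1+4w) = -1\}$ by the standard indicator-function technique for quadratic characters, being careful at the two points where $\eta$ vanishes. For $x \in \F_q$ the quantity $\frac{1-\eta(x)}{2}$ equals $1$ when $x$ is a nonzero non-square, $0$ when $x$ is a nonzero square, and $\frac12$ when $x = 0$. Hence for every $w \notin \{0, -1/4\}$ the product $\frac{1-\eta(w)}{2}\cdot\frac{1-\eta(1+4w)}{2}$ is exactly the indicator that both $w$ and $1+4w$ are non-squares, and since neither $w = 0$ nor $w = -1/4$ can satisfy $\eta(w) = \eta(1+4w) = -1$, summing this product over all of $\F_q$ recovers $N$ plus two boundary contributions.

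First I would isolate those boundary terms. At $w = 0$ the factor $\frac{1-\eta(1)}{2} = 0$, so $w=0$ contributes nothing. At $w = -1/4$ we have $1+4w = 0$, giving a factor $\frac12$, while $\eta(-1/4) = \eta(-1)$ because $\frac14$ is a square; the contribution is therefore $\frac12\cdot\frac{1-\eta(-1)}{2} = \frac{1-\eta(-1)}{4}$. Thus, writing $S = \sum_{w\in\F_q}\frac{1-\eta(w)}{2}\cdot\frac{1-\eta(1+4w)}{2}$, we obtain $N = S - \frac{1-\eta(-1)}{4}$.

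Next I would expand $S = \frac14\sum_{w\in\F_q}\left(1 - \eta(w) - \eta(1+4w) + \eta(w)\eta(1+4w)\right)$ and evaluate each sum using Lemma \ref{quadratic-lem}. The constant term gives $q$; both $\sum_w \eta(w)$ and $\sum_w \eta(1+4w)$ vanish by part (2), the latter because $w \mapsto 1+4w$ is a bijection of $\F_q$. The crucial term is $\sum_w \eta(w)\eta(1+4w) = \sum_w \eta(4w^2 + w)$, which by part (4) with $a=4$, $b=1$, $c=0$ equals $-\eta(4) = -1$, since $4 = 2^2$ is a nonzero square. Hence $S = \frac14(q-1)$, and $N = \frac{q-1}{4} - \frac{1-\eta(-1)}{4} = \frac{q-2+\eta(-1)}{4}$, as claimed.

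The only genuinely delicate point is the boundary bookkeeping: the naive indicator product does not coincide with the target indicator at $w = -1/4$ (where $1+4w$ equals $0$ rather than being a non-square), and this mismatch is precisely what produces the $\eta(-1)$ correction in the final count. Everything else reduces to the routine character-sum evaluations recorded in Lemma \ref{quadratic-lem}, with the single arithmetic observation that $\eta(4)=1$ in odd characteristic.
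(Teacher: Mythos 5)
Your proof is correct and follows essentially the same route as the paper's: both count via the indicator product $\tfrac{1}{4}(1-\eta(w))(1-\eta(1+4w))$, evaluate the resulting character sums with Lemma \ref{quadratic-lem} (in particular $\sum_w \eta(4w^2+w)=-\eta(4)=-1$), and subtract the same boundary correction $\tfrac{1-\eta(-1)}{4}$ arising at $w=-1/4$. Your treatment of the boundary terms is, if anything, slightly more explicit than the paper's.
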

\begin{proof}
Let $N_w=\#\{w \in \F_q : \eta(w)=\eta(1+4w)=-1\}$. Then by   Lemma \ref{quadratic-lem}, we have
\[\begin{aligned}
N_w &= \frac{1}{4} \sum\limits_{w\neq 0, -\frac{1}{4}} (1-\eta(w))(1-\eta(1+4w))\\
  &= \frac{1}{4} \sum\limits_{w \in \F_q} (1-\eta(w))(1-\eta(1+4w)) -\frac{1}{4}(1-\eta(0))(1-\eta(1)) -\frac{1}{4}(1-\eta(-\frac{1}{4}))(1-\eta(0))\\
  &= \frac{1}{4} \sum\limits_{w \in \F_q} (1-\eta(w)-\eta(1+4w)+\eta(w(1+4w))) -\frac{1}{4}(1-\eta(-1)) \\
  &= \frac{1}{4} \sum\limits_{w \in \F_q} 1  + \frac{1}{4} \sum\limits_{w \in \F_q} \eta(4w^2+w) -\frac{1}{4}(1-\eta(-1))\\
  &= \frac{q-2+\eta(-1)}{4}.
\end{aligned}
\]
\end{proof}

Then we give the following results which will be used later.

\begin{lemma}\label{p odd-lem}
Let $w \in \F_q$ satisfying $\eta(w)=\eta(1+4w)=-1$, and $u_1,u_2\in\F_q^*$.  Define
\begin{align*}
	B_1(u_1,u_2)&=\#\{x\in\F_q:u_1x^2+u_2x+u_2w=0\},\\
	B_2(u_1,u_2)&=\#\{x\in\F_q:u_1x^2+u_2x-u_1w=0\}.
\end{align*}
Then $B_i(u_1,u_2)\in\{0,1,2\}$ for $i=1,2$. Especially, $A_i(u,-u)=0$ for any $u\in \F_q^*$, and
\begin{equation*}
	\#\{(u_1,u_2)\in(\F_q^*)^2:B_1(u_1,u_2)=2\}=(q-1)(q-3)/2.
\end{equation*}	
\begin{equation*}
	\#\{(u_1,u_2)\in(\F_q^*)^2:B_2(u_1,u_2)=2\}=(q-1)(q-2+\eta(-1))/2.
\end{equation*}	
\end{lemma}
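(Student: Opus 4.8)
The plan is to reduce every statement to the quadratic character of a discriminant. Since $u_1\in\F_q^*$, both defining equations are genuine quadratics in $x$ with nonzero leading coefficient, so each has at most two roots and $B_i(u_1,u_2)\in\{0,1,2\}$ immediately. Because $q$ is odd, the exact number of roots of $u_1x^2+u_2x+c=0$ is governed by its discriminant $\Delta$: there are two roots iff $\eta(\Delta)=1$, one iff $\Delta=0$, and none iff $\eta(\Delta)=-1$. A direct computation gives the two discriminants
$$\Delta_1 = u_2^2-4u_1u_2w = u_2(u_2-4u_1w), \qquad \Delta_2 = u_2^2+4u_1^2w,$$
which is all I will need. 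For the degenerate claim $B_i(u,-u)=0$ (written $A_i$ in the statement), I substitute $u_1=u$, $u_2=-u$; after dividing by $u$ each equation becomes $x^2-x-w=0$, whose discriminant is $1+4w$. Since $\eta(1+4w)=-1$ by hypothesis, there is no solution.

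For the two counting formulas I fix $u_1\in\F_q^*$ and count the $u_2\in\F_q^*$ with $\eta(\Delta_i)=1$, then sum over $u_1$. Viewing $\Delta_i$ as a quadratic polynomial in $u_2$ with leading coefficient $1$, Lemma~\ref{quadratic-lem}(4) gives $\sum_{u_2\in\F_q}\eta(\Delta_i)=-1$. Writing $S_+,S_-,S_0$ for the number of $u_2\in\F_q$ with $\eta(\Delta_i)$ equal to $1,-1,0$, I have $S_++S_-+S_0=q$ and $S_+-S_-=-1$, so $S_+=(q-1-S_0)/2$. It remains to determine $S_0$, the number of zeros of $\Delta_i$ in the variable $u_2$.

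The two cases differ precisely here. For $\Delta_1=u_2(u_2-4u_1w)$ the roots $0$ and $4u_1w$ are distinct (as $u_1w\neq0$), so $S_0=2$ and $S_+=(q-3)/2$. For $\Delta_2=u_2^2+4u_1^2w$ the zeros satisfy $u_2^2=-4u_1^2w$; since $\eta(-4u_1^2w)=\eta(-1)\eta(4)\eta(u_1^2)\eta(w)=-\eta(-1)$, this has $1+\eta(-4u_1^2w)=1-\eta(-1)$ solutions, whence $S_+=(q-2+\eta(-1))/2$. In both cases $u_2=0$ is never counted in $S_+$: for $\Delta_1$ it lies in $S_0$, and for $\Delta_2$ it yields $\eta(\Delta_2)=\eta(4u_1^2w)=\eta(w)=-1$. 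Hence restricting to $u_2\in\F_q^*$ leaves $S_+$ unchanged, and summing over the $q-1$ choices of $u_1$ produces $(q-1)(q-3)/2$ and $(q-1)(q-2+\eta(-1))/2$, as claimed.

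I expect the only genuine subtlety to be the $B_2$ count, where one must evaluate $\eta(-4u_1^2w)=-\eta(-1)$ correctly to split on whether $-4u_1^2w$ is a square; this is exactly what injects the $\eta(-1)$ correction term and separates the two formulas. The remaining work is routine discriminant bookkeeping, together with the small verification that excluding $u_2=0$ does not disturb the square-count $S_+$.
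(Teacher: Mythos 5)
Your proof is correct and follows essentially the same route as the paper: both arguments reduce the two counts to the quadratic character of the discriminant and evaluate $\sum\eta(\cdot)$ over a quadratic via Lemma \ref{quadratic-lem}(4), the only cosmetic difference being that the paper first normalizes to the ratio $\mu=u_2/u_1$ while you fix $u_1$ and count over $u_2$ directly (an equivalent reparametrization). Your explicit checks that $u_2=0$ never lands in $S_+$ and that the zeros of $\Delta_2$ contribute $1-\eta(-1)$ match the paper's handling of the excluded values.
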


\proof
It's obvious that $B_i(u_1,u_2)\in\{0,1,2\}$ for $i=1,2$. The result $B_i(u,-u)=0$ is since $\eta(1+4w)=-1$.

For the first equation
\begin{equation}\label{p odd-lem-eq1}
	u_1x^2+u_2x+u_2w=0,~{\rm for}~ u_1,u_2\in\F_q^*.
\end{equation}

Let $\mu=\frac{u_2}{u_1}$, $N_{\mu}= \# \{\mu \in \F_q: \eta(\mu^2-4\mu w)=1\}$. By Lemma \ref{quadratic-lem},
$$
\begin{aligned}
N_{\mu} &= \frac{1}{2} \sum\limits_{\mu \neq 0, 4w} (1+\eta(\mu ^2-4\mu w))\\
  &= \frac{1}{2} \sum\limits_{\mu \in \F_q} (1+\eta(\mu ^2-4\mu w)) -\frac{1}{2}(1+\eta(0))-\frac{1}{2}(1+\eta(0))\\
  &= \frac{1}{2} \sum\limits_{\mu \in \F_q} 1  + \frac{1}{2} \sum\limits_{\mu \in \F_q} \eta(\mu ^2-4\mu w) -1\\
  &= \frac{q}{2} - \frac{1}{2} - 1\\
  &= \frac{q-3}{2}.
\end{aligned}
$$
Therefore, there are exactly $(q-1)(q-3)/2$ choices of $(u_1,u_2)$ such that (\ref{p odd-lem-eq1}) has exactly two solutions.

For the second equation
\begin{equation}\label{p odd-lem-eq2}
	u_1x^2+u_2x-u_1w=0,~{\rm for}~ u_1,u_2\in\F_q^*.
\end{equation}
Let $\nu=\frac{u_2}{u_1}$, $N_{\nu}= \# \{\nu \in \F_q: \eta(\nu^2+4w)=1\}$. By Lemma \ref{quadratic-lem},
$$
\begin{aligned}
N_{\nu} &= \frac{1}{2} \sum\limits_{\nu^2\neq -4w} (1+\eta(\nu^2+4w))\\
  &= \frac{1}{2} \sum\limits_{\nu \in \F_q} (1+\eta(\nu^2+4w)) -\frac{1}{2} \sum\limits_{\nu^2= -4w} 1\\
  &= \frac{1}{2} \sum\limits_{\nu \in \F_q} 1  + \frac{1}{2} \sum\limits_{\nu \in \F_q} \eta(\nu^2+4w) -\frac{1-\eta(-1)}{2}\\
  &= \frac{q}{2} - \frac{1}{2} - \frac{1-\eta(-1)}{2}\\
  &= \frac{q-2+\eta(-1)}{2}.
\end{aligned}
$$
Therefore, there are exactly $(q-1)(q-2+\eta(-1))/2$ choices of $(u_1,u_2)$ such that (\ref{p odd-lem-eq2}) has exactly two solutions.
\qed

\begin{theorem}\label{q+5odd}
 Let $q=p^m$, where $p$ is an odd prime and $m$ is a positive integer. The linear code $\cC_w$ is an $[q+5, 3, q+2]$ NMDS code over $\F_q$.

 If $q \equiv 1 ~(mod ~4)$, the weight enumerator of $\cC_w$  is
$$
A(z)=1 + (2q+2)(q-1) z^{q+2} + \frac{(q-1)(q^2-3q+8)}{2} z^{q+3} + (3q-9)(q-1) z^{q+4} + \frac{(q-1)(q^2-5q+8)}{2} z^{q+5}.
$$

%$$公式太长可以分段：
%\begin{aligned}
%A(z)=1 + (2q+2)(q-1) z^{q+2} + \frac{(q-1)(q^2-3q+8)}{2} z^{q+3} &+ (3q-9)(q-1) z^{q+4}\\
%&+ \frac{(q-1)(q^2-5q+8)}{2} z^{q+5}.
%\end{aligned}
%$$

If $q \equiv 3 ~(mod ~4)$, the weight enumerator of $\cC_w$ is
$$
A(z)=1 + (2q+1)(q-1) z^{q+2} + \frac{(q-1)(q^2-3q+14)}{2} z^{q+3} +  (3q-12)(q-1) z^{q+4} + \frac{(q-1)(q^2-5q+10)}{2} z^{q+5}.
$$

%$$公式太长可以分段：
%\begin{aligned}
%A(z)=1 + (2q+1)(q-1) z^{q+2} + \frac{(q-1)(q^2-3q+14)}{2} z^{q+3} &+  (3q-12)(q-1) z^{q+4}\\
%&+ \frac{(q-1)(q^2-5q+10)}{2} z^{q+5}.
%\end{aligned}
%$$
\end{theorem}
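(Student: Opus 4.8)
\medskip
\noindent\textit{Proof proposal.} The plan is to follow the same two-step strategy as in Theorem~\ref{q+5}: first show that the set $S_w$ of columns of $G_w$ is a $(q+5,3)$-arc in PG$(2,q)$, so that $\cC_w$ is an $[q+5,3,q+2]$ NMDS code, and then compute the single number $A_{q+2}$ of minimum weight codewords and feed it into Lemma~\ref{weight} to recover the whole enumerator. Throughout I use that an oval point $(x^2,x,1)$ lies on the line ${\cal H}_{\bf u}$ with ${\bf u}=(u_1,u_2,u_3)$ iff $u_1x^2+u_2x+u_3=0$, that ${\bf g}_{q+1}=(1,0,0)\in{\cal H}_{\bf u}$ iff $u_1=0$, and the four incidence conditions $u_2=0$, $u_1+u_2=0$, $u_3=wu_2$, $u_3=-wu_1$ for the added points ${\bf g}_{q+2},\dots,{\bf g}_{q+5}$ respectively.

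Conditions $N1'$ and $N2'$ are immediate: the columns are pairwise distinct points, and ${\bf g}_{q+1},{\bf g}_{q+2},{\bf g}_{q+3}$ are collinear on the line ${\bf u}=(0,0,1)$. For $N3'$, the first $q+1$ columns form an oval and hence meet every line in at most two points, so any line carrying four points of $S_w$ would have to contain at least two of the added points. I would therefore examine the $\binom{4}{2}$ joins of pairs of added points: the joins ${\bf g}_{q+2}{\bf g}_{q+3}$ and ${\bf g}_{q+2}{\bf g}_{q+4}$ yield the coordinate lines $(0,0,1)$ and $(1,0,0)$, each meeting the oval in a single point; the three pairs among ${\bf g}_{q+3},{\bf g}_{q+4},{\bf g}_{q+5}$ all give the one line $(-1,1,w)$ (so these three added points are themselves collinear), whose oval intersection is governed by $x^2-x-w=0$ with discriminant $1+4w$; and the join ${\bf g}_{q+2}{\bf g}_{q+5}$ gives the line $(1,0,-w)$, whose oval intersection is governed by $x^2-w=0$. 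The hypotheses $\eta(1+4w)=-1$ and $\eta(w)=-1$ make these last two quadratics have no root in $\F_q$, so no line meets $S_w$ in four points and $N3'$ follows.

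For the enumerator I would compute $A_{q+2}=\#\{{\bf 0}\neq{\bf u}\in\F_q^3:\#({\cal H}_{\bf u}\cap S_w)=3\}$ via (\ref{weight-eq1}), partitioning the lines with exactly three points of $S_w$ according to the smallest index of an added point lying on them; this gives a four-way case split analogous to (but one case longer than) that for Theorem~\ref{q+5}, since there are now four added points. In each case the count of admissible ${\bf u}$ reduces to counting the roots in $\F_q$ of a quadratic $u_1x^2+u_2x+u_3=0$ coming from the oval, together with the solubility counts $B_1$ (from the join with ${\bf g}_{q+4}$) and $B_2$ (from the join with ${\bf g}_{q+5}$) of Lemma~\ref{p odd-lem}; each projective line accounts for $q-1$ scalar multiples. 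Summing the four cases produces $A_{q+2}=(2q+2)(q-1)$ when $q\equiv1\pmod4$ and $A_{q+2}=(2q+1)(q-1)$ when $q\equiv3\pmod4$, and substituting this value (which equals $A_{n-k}$ since $n-k=q+2$) into Lemma~\ref{weight} with $n=q+5$ and $k=3$ yields the two displayed weight enumerators.

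The main obstacle is the bookkeeping in this last step. Unlike the even case, an oval has tangent lines, so the quadratic $u_1x^2+u_2x+u_3=0$ contributes $0$, $1$, or $2$ oval points according as $\eta$ of its discriminant equals $-1$, $0$, or $1$; each case must therefore separate the tangent (single-root) contribution from the secant one. Crucially, the count of pairs $(u_1,u_2)$ with $B_2(u_1,u_2)=2$ carries the factor $\eta(-1)$ from Lemma~\ref{p odd-lem}, whereas the $B_1$ count does not, and it is exactly this term that forces the final answer to split across the congruence classes $q\equiv1$ and $q\equiv3\pmod4$; indeed the two stated values of $A_{q+2}$ differ by precisely $q-1$, matching the $\eta(-1)$-dependence of the $B_2$ count. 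Keeping every occurrence of $\eta(-1)$ and every boundary (tangent-line) configuration correctly tallied is the delicate part of the proof.
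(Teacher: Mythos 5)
Your proposal is correct and, for the substantive part of the argument --- the computation of $A_{q+2}$ --- it follows exactly the paper's route: the same four-way case split according to which of the added points ${\bf g}_{q+2},\dots,{\bf g}_{q+5}$ the line meets first, the same reliance on Lemma~\ref{p odd-lem} (with the $\eta(-1)$-dependence entering only through the $B_2$ count in the last case), and the same final values $(2q+2)(q-1)$ resp.\ $(2q+1)(q-1)$ substituted into Lemma~\ref{weight}. Where you genuinely diverge is the verification of condition $N3'$: the paper obtains the $(q+5,3)$-arc property only as a by-product of the exhaustive classification of all lines, whereas you note that a line carrying four points of $S_w$ must contain at least two of the four added points and then inspect just the six joins, finding that three of the pairs span the single line $(-1,1,w)$ (so ${\bf g}_{q+3},{\bf g}_{q+4},{\bf g}_{q+5}$ are collinear and that line avoids the oval because $\eta(1+4w)=-1$), while the join of ${\bf g}_{q+2}$ and ${\bf g}_{q+5}$ avoids the oval because $\eta(w)=-1$. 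This is a cleaner, more transparent explanation of why both conditions on $w$ are imposed, though it saves no work overall since the full line-by-line analysis is still required for the enumerator. The one thing you have not actually done is execute that bookkeeping: the case counts are asserted rather than derived. Your outline does, however, correctly flag the two delicate points (separating tangent from secant contributions, and tracking $\eta(-1)$), and your consistency check --- that the two stated values of $A_{q+2}$ differ by exactly $q-1$, matching the swing of the $B_2$ count --- is sound.
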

\proof
Let $S_w$ be the set of columns of $G_w$. In order to prove that ${\cal C}_w$ is an NMDS code, we only need to prove that $S_w$ is an $(n,3)$-arc in PG$(2,q)$.

It's easy to find that any two points of $S_w$ generate a line in PG$(2,q)$, and there exist three collinear points, such as $\{{\bf g}_{q+1},{\bf g}_{q+2},{\bf g}_{q+3}\}$. Now we will prove that no four points of $S_w$ lie on a line, i.e., $\#({\cal H}_{\bf u}\cap S_w)\leq 3$ for any line ${\cal H}_{\bf u}$ with nonzero ${\bf u}=(u_1,u_2,u_3)\in \F_q^3$.

Since $S_w'=\{{\bf g}_1,\ldots,{\bf g}_{q+1}\}$ is an oval, which means $\#({\cal H}_{\bf u}\cap S_w')\leq 2$ for any line ${\cal H}_{\bf u}$. Thus we only consider the intersection number $\#({\cal H}_{\bf u}\cap S_w)$ for any line ${\cal H}_{\bf u}$ which is incident with some of $\{{\bf g}_{q+2},{\bf g}_{q+3},{\bf g}_{q+4},{\bf g}_{q+5}\}$. The rest of the proof will be divided into four cases:

{\bf Case 1}. ${\bf g}_{q+2}\in{\cal H}_{\bf u}$, i.e., $u_2=0$, which implies that ${\bf u}=(u_1,0,u_3)$.
\begin{itemize}
\item	If $u_1=0$, then ${\bf u}=(0,0,u_3)$ for $u_3\in\F_q^*$. Obviously ${\cal H}_{\bf u}\cap S_w=\{{\bf g}_{q+1},{\bf g}_{q+2},{\bf g}_{q+3}\}$.
\item If $u_1\neq 0$ and $u_3=0$, then ${\bf u}=(u_1,0,0)$ for $u_1\in\F_q^*$. Obviously ${\cal H}_{\bf u}\cap S_w=\{{\bf g}_{q},{\bf g}_{q+2},{\bf g}_{q+4}\}$.
\item If $u_1\neq0$ and $u_3=-u_1w$, then ${\bf u}=(u_1,0,-u_1w)$. It's easy to obtain that $\{{\bf g}_{q+2},{\bf g}_{q+5}\}\in{\cal H}_{\bf u}$. The choice of $w$ makes that $x^2-w=0$ has no solution in $\F_q$, thus ${\cal H}_{\bf u}\cap S_w=\{{\bf g}_{q+2},{\bf g}_{q+5}\}$.

	\item If $u_1\neq0$ and $u_3\in \F_q\setminus \{0,-u_1w\}$, then ${\bf g}_k\notin{\cal H}_{\bf u}$ for $k\in\{q+1,q+3,q+4,q+5\}$. Now we consider the equation $x^2+u_1^{-1}u_3=0$.

	\begin{itemize}
	\item If $x^2+u_1^{-1}u_3=0$ has solutions in $\F_q$. Since $u_1^{-1}u_3\neq 0$, it has exactly two solutions $\{\alpha_{i_1},\alpha_{j_1}\}$. Thus ${\cal H}_{\bf u}\cap S_w=\{{\bf g}_{i_1},{\bf g}_{j_2},{\bf g}_{q+2}\}$. There are $(q-1)/2$ choices of such $u_3$ for given $u_1$ since $\eta(-y)=1$ has $(q-1)/2$ solutions.

	\item If $x^2+u_1^{-1}u_3=0$ has no solution in $\F_q$, then ${\cal H}_{\bf u}\cap S_w=\{{\bf g}_{q+2}\}$.
	\end{itemize}

\end{itemize}

 Thus in this case, the number of ${\bf u}\in\F_q^3$ satisfying $\#({\cal H}_{\bf u}\cap S_w)=3$ is exactly
\[(1+1+(q-1)/2)(q-1)=(q+3)(q-1)/2.\]

{\bf Case 2}. ${\bf g}_{q+2}\notin{\cal H}_{\bf u}$ and ${\bf g}_{q+3}\in{\cal H}_{\bf u}$. Then ${\bf u}=(u_1,-u_1,u_3)$ with $u_1\neq 0$.
\begin{itemize}
\item If $u_3=-u_1w$, then ${\bf u}=(u_1,-u_1,-u_1w)$ for $u_1\in\F_q^*$. Obviously ${\cal H}_{\bf u}\cap S_w=\{{\bf g}_{q+3},{\bf g}_{q+4},{\bf g}_{q+5}\}$.

\item If $u_3\in \F_q\setminus \{-u_1w\}$ then ${\bf g}_k\notin{\cal H}_{\bf u}$ for $k\in\{q+1,q+2,q+4,q+5\}$. Consider the equation $x^2-x+u_1^{-1}u_3=0$.

	\begin{itemize}

	\item If $x^2-x+u_1^{-1}u_3=0$ has exactly two solutions $\{\alpha_{i_2},\alpha_{j_2}\}$ in $\F_q$, then ${\cal H}_u\cap S_w=\{{\bf g}_{i_2},{\bf g}_{j_2},{\bf g}_{q+3}\}$. There are $(q-1)/2$ choices of such $u_3$ for given $u_1$ since $\eta(1-4y)=1$ has $(q-1)/2$ solutions.
	\item If $x^2-x+u_1^{-1}u_3=0$ has only one solution $\alpha_{i}$ in $\F_q$, then ${\cal H}_u\cap S_w=\{{\bf g}_{i},{\bf g}_{q+3}\}$. %where $i$ is the unique number such that $\alpha_{i}=x$.
    \item If $x^2-x+u_1^{-1}u_3=0$ has no solution in $\F_q$, then ${\cal H}_{\bf u}\cap S_w=\{{\bf g}_{q+3}\}$.
    \end{itemize}

\end{itemize}

 Thus in this case, the number of ${\bf u}\in\F_q^3$ satisfying $\#({\cal H}_{\bf u}\cap S_w)=3$ is exactly
\[(1+(q-1)/2)(q-1)=(q+1)(q-1)/2.\]

{\bf Case 3}. ${\bf g}_{q+2},~{\bf g}_{q+3}\notin{\cal H}_{\bf u}$ and ${\bf g}_{q+4}\in{\cal H}_{\bf u}$. Then ${\bf u}=(u_1,u_2,u_2w)$ with $-u_1\neq u_2\neq 0$.

\begin{itemize}
	\item If $u_1=0$, then ${\bf u}=(0,u_2,u_2w)$ for $u_2\in\F_q^*$.  It's easy to obtain that ${\cal H}_{\bf u}\cap S_w=\{{\bf g}_{j},{\bf g}_{q+1},{\bf g}_{q+4}\}$, where $j$ is the unique number such that $\alpha_j=-w$.
	\item If $u_1\neq0$, then ${\bf g}_k\notin{\cal H}_{\bf u}$ for $k\in\{q+1,q+2,q+3,q+5\}$. Consider the equation $u_1x^2+u_2x+u_2w=0$.
	\begin{itemize}
	%\item If $u_1x^2+u_2x+u_2w=0$ has exactly two solutions $\{x_1,x_2\}$ in $\F_q$, then  ${\cal H}_{\bf u}\cap S_w=\{{\bf g}_{i},{\bf g}_{j},{\bf g}_{q+4}\}$, where $i,j$ are the unique numbers such that $\alpha_i=x_1,\alpha_j=x_2$. By Lemma \ref{p odd-lem}, there are $(q-1)(q-3)/2$ choices of such $(u_1,u_2)$.
	\item If $u_1x^2+u_2x+u_2w=0$ has exactly two solutions $\{\alpha_{i_3},\alpha_{j_3}\}$ in $\F_q$, then  ${\cal H}_{\bf u}\cap S_w=\{{\bf g}_{i_3},{\bf g}_{j_3},{\bf g}_{q+4}\}$. There are $(q-1)(q-3)/2$ choices of such $(u_1,u_2)$ by Lemma \ref{p odd-lem}.
	\item If $u_1x^2+u_2x+u_2w=0$ has only one solution $\alpha_l$ in $\F_q$, then ${\cal H}_{\bf u}\cap S_w=\{{\bf g}_{l},{\bf g}_{q+4}\}$. %where $l$ is the unique number such that $\alpha_l=x$.
	\item If $u_1x^2+u_2x+u_2w=0$ has no solution in $\F_q$, then ${\cal H}_{\bf u}\cap S_w=\{{\bf g}_{q+4}\}$.
	\end{itemize}

\end{itemize}

In this case, the number of ${\bf u}\in\F_q^3$ satisfying $\#({\cal H}_{\bf u}\cap S_w)=3$ is exactly
\[(1+(q-3)/2)(q-1)=(q-1)^2/2.\]

{\bf Case 4}. ${\cal H}_{\bf u}\cap \{{\bf g}_{q+2},{\bf g}_{q+3},{\bf g}_{q+4},{\bf g}_{q+5}\}={\bf g}_{q+5}$. Then ${\bf u}=(u_1,u_2,-u_1w)$ with $-u_1\neq u_2 \neq 0$.
\begin{itemize}
	\item If $u_1=0$, then ${\bf u}=(0,u_2,0)$ for $u_2\in\F_q^*$. Obviously ${\cal H}_{\bf u}\cap S_w=\{{\bf g}_{q},{\bf g}_{q+1},{\bf g}_{q+5}\}$.
	\item If $u_1\neq0$, then ${\bf g}_k\notin{\cal H}_u$ for $k\in\{q+1,q+2,q+3,q+4\}$. Consider the equation $u_1x^2+u_2x-u_1w=0$.

    \begin{itemize}
	%\item If $u_1x^2+u_2x-u_1w=0$ has exactly two solutions $\{x_1,x_2\}$ in $\F_q$, then ${\cal H}_{\bf u}\cap S_w=\{{\bf g}_{i},{\bf g}_{j},{\bf g}_{q+5}\}$, where $i,j$ are the unique numbers such that $\alpha_i=x_1,\alpha_j=x_2$. By Lemma \ref{p odd-lem}, there are $(q-1)(q-2+\eta(-1))/2$ choices of such $(u_1,u_2)$.
	\item If $u_1x^2+u_2x-u_1w=0$ has exactly two solutions $\{\alpha_{i_4},\alpha_{j_4}\}$ in $\F_q$, then ${\cal H}_{\bf u}\cap S_w=\{{\bf g}_{i_4},{\bf g}_{j_4},{\bf g}_{q+5}\}$. There are $(q-1)(q-2+\eta(-1))/2$ choices of such $(u_1,u_2)$ by Lemma \ref{p odd-lem}.
	\item If $u_1x^2+u_2x-u_1w=0$ has only one solution $\alpha_t$ in $\F_q$, then ${\cal H}_{\bf u}\cap S_w=\{{\bf g}_{t},{\bf g}_{q+5}\}$. %where $i$ is the unique number such that $\alpha_i=x$.
	\item If $u_1x^2+u_2x-u_1w=0$ has no solution in $\F_q$, then ${\cal H}_{\bf u}\cap S_w=\{{\bf g}_{q+5}\}$.
	\end{itemize}
\end{itemize}

In this case, the number of ${\bf u}\in\F_q^3$ satisfying $\#({\cal H}_{\bf u}\cap S_w)=3$ is exactly
\[(1+(q-2+\eta(-1))/2)(q-1)=(q+\eta(-1))(q-1)/2.\]

Based on the above cases, we  prove that $\#({\cal H}_{\bf u}\cap S_w)\leq 3$ for any nonzero ${\bf u}\in\F_q^3$, thus we obtain that $S_w$ is an $(q+5,3)$-arc in PG$(2,q)$, and ${\cal C}_w$ is an $[q+5,3,q+2]$ NMDS code over $\F_q$. In order to determine the weight distribution of the resultant NMDS code, we only need to compute $A_{q+2}$, which is equal to
\begin{align*}
	A_{q+2}&=\#\left\{{\bf u}\in \F_q^3\setminus\{{\bf0}\}:\#({\cal H}_{\bf u}\cap S_w)=3\right\}\\
	&=(q+3+q+1+q-1+q+\eta(-1))(q-1)/2\\
	&=(4q+3+\eta(-1))(q-1)/2\\
	&=\begin{cases}
(2q+2)(q-1),  & \text { if } q \equiv 1 ~(mod~4), \\
(2q+1)(q-1), & \text { if }q \equiv 3 ~(mod~4).
\end{cases}
\end{align*}
The last equation is obtained by
Lemma \ref{quadratic-lem} (3). Therefore,
the weight distribution of ${\cal C}_w$ can be completely determined by Lemma \ref{weight}.
\qed

Now we give two examples to illustrate Theorem \ref{q+5odd}.

\begin{example}\label{ex2}
Let $q=3^2$, $\zeta$ be a generator of $\F_{3^2}^*$ satisfying $\zeta^2+2\zeta+2=0$. Choose $w=\zeta^{5}$, then
\begin{equation}\label{ex2-1}
G_w^1=\left(\begin{array}{llllllllllllll}
\zeta^6&   2& \zeta^2&   1& \zeta^6 &  2& \zeta^2&   1&   0 &  1 &  0 &  1 &  0 &\zeta^5\\
\zeta^7& \zeta^6 &\zeta^5 &  2 &\zeta^3& \zeta^2 &  \zeta &  1  & 0  & 0&   1&   1 &\zeta^5 &  0\\
 1 &  1   &1 &  1 &  1  & 1 &  1 &  1 &  1 &  0 &  0 &  0&   2&   1
\end{array}\right).
\end{equation}
By Magma \cite{Magma}, it's easy to check that the linear code $\cC_w^1$ generated by (\ref{ex2-1}) is a $[14,3,11]$ NMDS code with weight enumerator
$$
A(z)=1 + 160 z^{11} + 248 z^{12} +  144 z^{13} + 176 z^{14},
$$
which coincides with the conclusion of Theorem \ref{q+5odd}.
\end{example}

\begin{example}\label{ex3}
Let $q=11$. Choose $w=7$, then
\begin{equation}\label{ex3-1}
G_w^2=\left(\begin{array}{llllllllllllllll}
1&  4&  9&  5&  3 & 3 & 5 & 9&  4 & 1&  0&  1 & 0&  1 & 0&  7\\
10&  9 & 8 & 7 & 6 & 5&  4&  3  &2 & 1 & 0 & 0 & 1 & 1&  7&  0\\
1&  1 & 1 & 1 & 1 & 1 & 1 & 1 & 1 & 1&  1 & 0  &0 & 0& 10 & 1
\end{array}\right).
\end{equation}
By Magma \cite{Magma}, it's easy to check that the linear code $\cC_w^2$ generated by (\ref{ex3-1}) is a $[16,3,13]$ NMDS code with weight distribution
$$
A(z)=1 + 230 z^{13} + 510 z^{14} +  210 z^{15} + 380 z^{16},
$$
which coincides with the conclusion of Theorem \ref{q+5odd}.
\end{example}

\section{Optimal Locally Recoverable Codes}\label{Sec: LRC}

For locally recoverable codes (LRCs), a symbol is lost due to a node failure, its value can be recovered if every coordinate of the codeword $\vc\in \mathcal{C}$ can be recovered from a subset of $r$ other coordinates of $\vc$. Mathematically,
a code $\mathcal {C}$ has {\it locality $r$} if for every $i\in[n]:=\{1, 2, \cdots , n\}$ there exists a subset $R_i \subseteq [n]\setminus\{i\}$, $\# R_i\leq r$ and a function $\phi_i$ such that for every codeword $\vc=(c_1,\ldots,c_n)\in\mathcal C$,
$$c_i=\phi_i(\{c_j, j\in R_i\}).$$

An $(n,k,r)$-LRC code $\mathcal{C}$ over $\mathbb {F}_q$ is of code length $n$, cardinality $q^k$, and locality $r$. The parameters of an $(n,k,r)$-LRC code have been intensively studied.

\begin{lemma}[\cite{Gopalan2012}, Singleton-like bound]
Let $\mathcal {C}$ be an $(n,k,r)$-LRC code over $\mathbb {F}_q$, then
the minimum distance of $\mathcal {C}$ satisfies
\begin{equation}\label{Singleton}
d \leq n-k-\lceil\frac{k}{r}\rceil+2.
\end{equation}
\end{lemma}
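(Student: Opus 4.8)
The plan is to recast the Singleton-like bound as a purely linear-algebraic statement about the column span of a generator matrix, and then to exploit locality through a greedy accumulation of repair groups. Fix a $k\times n$ generator matrix $G$ of $\mathcal{C}$, and for a subset $S\subseteq[n]$ write $\rank(S)$ for the rank of the submatrix of $G$ formed by the columns indexed by $S$. The starting point is the standard reformulation of the minimum distance, entirely in the spirit of Lemma \ref{Weight-lem1}: a nonzero codeword vanishes on a coordinate set $S$ exactly when the columns indexed by $S$ fail to span, so that
\[
d \;=\; n-\max\{|S| : S\subseteq[n],\ \rank(S)\le k-1\}.
\]
Hence it suffices to produce a single set $S$ with $\rank(S)\le k-1$ and $|S|\ge k+\lceil k/r\rceil-2$, since this forces $d\le n-(k+\lceil k/r\rceil-2)=n-k-\lceil k/r\rceil+2$.

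Next I would encode locality as a rank deficiency. For each coordinate $i$ there is a repair set $R_i$ with $\# R_i\le r$ such that $c_i$ is a function of $\{c_j:j\in R_i\}$; for a linear code this means the $i$-th column of $G$ lies in the span of the columns indexed by $R_i$. Consequently the repair group $\Gamma_i:=\{i\}\cup R_i$ carries an internal linear dependency, giving $\rank(\Gamma_i)\le|\Gamma_i|-1$ together with $\rank(\Gamma_i)\le r$. I would then build $S$ greedily: set $S_0=\emptyset$ and, while $\rank(S_{j-1})\le k-2$, adjoin a repair group $\Gamma_{i_j}$ for some $i_j\notin S_{j-1}$ whose column genuinely increases the rank, putting $S_j=S_{j-1}\cup\Gamma_{i_j}$. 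The crucial monitored quantity is the \emph{potential} $\Phi(S):=|S|-\rank(S)$: because each added group increases the cardinality by strictly more than it increases the rank (the recovered symbol contributes size but no rank), one has $\Phi(S_j)\ge\Phi(S_{j-1})+1$ at every round.

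Suppose the process runs for $t$ rounds before $\rank$ first reaches at least $k-1$. The potential bound gives $\Phi(S_t)\ge t$, while each round adds at most $r$ to the rank, so $\rank(S_t)\le rt$ and therefore $t\ge\lceil(k-1)/r\rceil\ge\lceil k/r\rceil-1$. If the last group overshoots $k-1$, I would prune coordinates that each drop the rank by exactly one: such a removal decreases both $|S|$ and $\rank(S)$ by one and so leaves $\Phi$ unchanged. After pruning to rank exactly $k-1$ we retain $\Phi(S)\ge t$, whence $|S|=\rank(S)+\Phi(S)\ge(k-1)+t\ge k+\lceil k/r\rceil-2$. Substituting into the distance formula of the first paragraph yields the claimed inequality \eqref{Singleton}.

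The main obstacle is exactly the bookkeeping in the greedy step. Repair groups may overlap, so adjoining $\Gamma_{i_j}$ need not add a full $r+1$ coordinates, and one must argue that the potential nonetheless strictly increases at every stage; this forces a careful choice of $i_j$ so that at least one genuinely new coordinate enters while the internal dependency of the group is brought in. The boundary round, where a full group would push the rank past $k-1$, must be handled by the rank-preserving-potential pruning just described, and it is precisely this overshoot analysis that produces the ceiling $\lceil k/r\rceil$ together with the additive constant $+2$. Once the inequality $|S|\ge k+\lceil k/r\rceil-2$ is secured, the remainder is immediate.
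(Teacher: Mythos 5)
The paper itself offers no proof of this lemma: it is quoted from \cite{Gopalan2012}, whose original argument is a counting/entropy one valid for arbitrary (not necessarily linear) codes. Your proof is the standard rank-based specialization to linear codes; since every code in this paper is linear, that restriction is harmless, and your overall architecture is the right one: the distance formula $d=n-\max\{|S|:\rank(G_S)\le k-1\}$, the observation that locality forces $\mathbf{g}_i\in\langle \mathbf{g}_j : j\in R_i\rangle$ (justified by noting that otherwise some codeword vanishes on $R_i$ but not at $i$), and the greedy accumulation of repair groups monitored by the potential $\Phi(S)=|S|-\rank(S)$.

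The one genuine flaw is the pruning step. You assert that when the last group pushes the rank past $k-1$ you can repeatedly delete a coordinate whose removal drops the rank by exactly one, keeping $\Phi$ fixed. Such a coordinate need not exist: if every column of $S_t$ lies in the span of the remaining columns, no single deletion lowers the rank and the procedure stalls at rank $\ge k$, which yields no codeword. For instance, with $k=3$, $r=3$, take $S_{t-1}=\{1,2\}$ with $\mathbf{g}_1=\mathbf{g}_2=e_1$ and a final group whose repair set contributes $e_2$, $e_3$, $e_2+e_3$ with recovered column $e_2$; then $\rank(S_t)=3$ but removing any one coordinate leaves the rank at $3$. The standard repair is partial inclusion rather than deletion: adjoin the new coordinates of $\Gamma_{i_t}$ to $S_{t-1}$ one at a time and stop the first time the rank equals $k-1$, which is possible because each column raises the rank by at most one and $\rank(S_{t-1})\le k-2$. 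This weakens the potential estimate to $\Phi\ge t-1$ in the overshoot case, but the overshoot itself gives $k\le\rank(S_{t-1}\cup\Gamma_{i_t})\le rt$, hence $t\ge\lceil k/r\rceil$, and $|S|=(k-1)+\Phi\ge k+\lceil k/r\rceil-2$ still follows; your original count $\Phi\ge t$ with $t\ge\lceil k/r\rceil-1$ covers the case where the rank lands on $k-1$ exactly. With that one correction the proof is complete.
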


Note that if $r=k$, the upper bound (\ref{Singleton}) coincides with the classical Singleton bound $d\leq n-k+1$.
LRC codes for which $d= n-k-\lceil k/r \rceil+2$ are called {\it $d$-optimal} LRC codes.

\begin{lemma}[\cite{Cadambe2013}, Cadambe-Mazumdar bound]
For any $(n, k, r)$-LRC code with minimum distance $d$ over $\mathbb {F}_q$, its dimension satisfies
\begin{equation}\label{dim-bound}
k \leq \min _{t \in \mathbb{Z}^{+}}\left[t r+k_{\mathrm{opt}}^{(q)}(n-t(r+1), d)\right],
\end{equation}
where $k_{\mathrm{opt}}^{(q)}(n, d)$ is the largest possible dimension of an $n$-length code, for a given alphabet
size $q$ and a given minimum distance $d$. $\mathbb{Z}^{+}$ represents the set of all positive integers.
\end{lemma}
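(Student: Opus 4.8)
The plan is to derive the bound from the locality property by a greedy ``rank versus size'' argument followed by code shortening. Write $G$ for a generator matrix of $\mathcal{C}$ and, for a column set $S\subseteq[n]$, let $\rank(S)$ denote the rank of the submatrix of $G$ whose columns are indexed by $S$. Since $\mathcal{C}$ is linear we may take each recovery function $\phi_i$ to be linear, so the locality condition $c_i=\phi_i(\{c_j:j\in R_i\})$ says precisely that the $i$-th column of $G$ lies in the span of the columns indexed by $R_i$; this is the one structural fact I will exploit.

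First I would fix $t\in\mathbb{Z}^{+}$ with $t(r+1)\le n$ (for larger $t$ the length $n-t(r+1)$ is negative and the corresponding term may be dropped from the minimum) and construct a set $S$ with $|S|=t(r+1)$ and $\rank(S)\le tr$. I build $S$ by adjoining $t$ local groups $\Gamma_i=\{i\}\cup R_i$, each of size at most $r+1$; because column $i$ depends on the columns of $R_i$, adjoining $\Gamma_i$ increases $|S|$ by at least one more than it increases $\rank(S)$, so after $t$ groups one has $\rank(S)\le|S|-t$ with $|S|\le t(r+1)$. If $|S|<t(r+1)$, which can only occur when the groups overlap, I pad $S$ with arbitrary further coordinates up to size $t(r+1)$; each padding coordinate raises $|S|$ by $1$ and $\rank(S)$ by at most $1$, so the slack $|S|-\rank(S)\ge t$ is preserved, and we reach $|S|=t(r+1)$ with $\rank(S)\le t(r+1)-t=tr$.

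Next I would pass to the shortened code $\mathcal{C}_S=\{\,\vc|_{[n]\setminus S}:\vc\in\mathcal{C},\ \vc|_{S}=\mathbf{0}\,\}$. Its length is $n-t(r+1)$; its minimum distance is at least $d$, since every nonzero word of $\mathcal{C}_S$ arises from a codeword of $\mathcal{C}$ supported entirely on $[n]\setminus S$ and hence of weight at least $d$; and projecting $\mathcal{C}$ onto the coordinates of $S$ (an image of dimension $\rank(S)$ with kernel the words vanishing on $S$) gives $\dim\mathcal{C}_S=k-\rank(S)\ge k-tr$. By the very definition of $k_{\mathrm{opt}}^{(q)}$ we then obtain $k-tr\le\dim\mathcal{C}_S\le k_{\mathrm{opt}}^{(q)}(n-t(r+1),d)$, that is $k\le tr+k_{\mathrm{opt}}^{(q)}(n-t(r+1),d)$, and minimizing over $t$ yields the stated inequality.

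The main obstacle is the greedy bookkeeping in the second step: one must hit the two competing targets $|S|=t(r+1)$ and $\rank(S)\le tr$ simultaneously even though the local recovery sets $R_i$ may overlap arbitrarily, and it is exactly the ``size grows strictly faster than rank'' invariant together with the padding step that secures this. For the fully general (possibly nonlinear) statement the same scheme goes through with $\rank$ replaced by base-$q$ Shannon entropy of coordinate blocks, the column dependency replaced by $H(X_i\mid X_{R_i})=0$, and conditional entropy playing the role of dimension; this is the route taken in \cite{Cadambe2013}.
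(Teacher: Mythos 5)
The paper offers no proof of this lemma at all---it is quoted directly from \cite{Cadambe2013}---so your argument cannot be compared against an in-paper proof; it must stand on its own, and it essentially does. What you give is the standard shortening argument for the linear case: build a coordinate set $S$ of size $t(r+1)$ whose column rank is at most $tr$ by greedily adjoining recovery groups, shorten at $S$, and apply the definition of $k_{\mathrm{opt}}^{(q)}$ to the shortened code, whose length is $n-t(r+1)$, whose minimum distance is at least $d$, and whose dimension is $k-\rank(S)\ge k-tr$. The reduction of locality to the column-span condition ${\bf g}_i\in\langle {\bf g}_j: j\in R_i\rangle$ is correct for linear codes, and the dimension count $\dim\mathcal{C}_S=k-\rank(S)$ is right. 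Two small points deserve to be made explicit. First, the invariant ``each adjoined group raises $|S|$ by at least one more than it raises $\rank(S)$'' is only guaranteed if at step $j$ you choose an anchor index $i_j\notin S_{j-1}$ (if $\Gamma_{i_j}\subseteq S_{j-1}$ the step adds nothing and the slack does not grow); this choice is always available because $|S_{j-1}|<t(r+1)\le n$, but it should be stated. Second, the restriction to $t$ with $t(r+1)\le n$ and the tacit monotonicity of $k_{\mathrm{opt}}^{(q)}(n',d)$ in $d$ (so that a shortened code of distance strictly larger than $d$ is still counted) are conventions one should flag. Since the lemma as stated in the paper allows nonlinear codes of cardinality $q^k$, your closing remark that the general case replaces rank by base-$q$ entropy and column dependence by $H(X_i\mid X_{R_i})=0$ is the honest account of what \cite{Cadambe2013} actually proves; for the paper's application (linear NMDS codes) your linear argument already suffices.
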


LRC codes for which $k = \min \limits_{t \in \mathbb{Z}^{+}}\left[t r+k_{\mathrm{opt}}^{(q)}(n-t(r+1), d)\right]$ are called {\it $k$-optimal} LRC codes.

Let
$$\mathcal{B}_{k}\left(\cC^{\perp}\right)=\{supp(\textbf{c}^{\bot}): \textbf{c}^{\bot} \in \cC^{\perp}, wt(\textbf{c}^{\bot})=k\}.$$
The following lemma gives a way to calculate the locality of NMDS codes.
%This property is proved in \cite{Tan2021}, we still present a brief proof for completeness.

\begin{lemma}[\cite{Tan2021}]\label{localities}
Let $\mathcal {C}$ be an $[n,k,n-k]$ NMDS code. Then the code $\mathcal {C}$ has locality $ k-1$,  if
$
\bigcup\limits_{S \in \mathcal{B}_{k}\left(\mathcal{C}^{\perp}\right)} \mathcal{S}=[n].
$
The dual code $\mathcal {C}^{\perp}$ has locality $ n-k-1$,  if
$
\bigcap\limits_{S \in \mathcal{B}_{k}\left(\mathcal{C}^{\perp}\right)} \mathcal{S}=\emptyset.
$

\end{lemma}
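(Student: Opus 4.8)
The plan is to exploit the standard parity-check mechanism for local recovery, reading off the recovering sets directly from minimum-weight codewords of the relevant dual code. The basic observation I would record first is the following: if $\mathbf{h}$ is a nonzero codeword of the dual of the code under consideration, with $\mathrm{supp}(\mathbf{h})=T$ and $i\in T$, then the parity-check identity $\sum_{j\in T}h_jc_j=0$ lets one solve $c_i=-h_i^{-1}\sum_{j\in T\setminus\{i\}}h_jc_j$, since $h_i\neq 0$. Hence $c_i$ is a fixed $\F_q$-linear function $\phi_i$ of the $|T|-1$ coordinates indexed by $T\setminus\{i\}$. Consequently, if every index $i\in[n]$ lies in the support of some dual codeword of weight $w$, the code has locality at most $w-1$; conversely, any recovering relation for $c_i$ is itself a nonzero dual codeword whose support contains $i$, so no locality below $(\text{minimum dual weight})-1$ is attainable. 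The whole argument thus reduces to understanding the supports of the minimum-weight codewords of the dual of each of $\mathcal{C}$ and $\mathcal{C}^{\perp}$.

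First I would treat $\mathcal{C}$ itself. Its dual $\mathcal{C}^{\perp}$ is an $[n,n-k,k]$ code, so its minimum weight is $k$ and the supports of its weight-$k$ codewords are exactly the members of $\mathcal{B}_{k}(\mathcal{C}^{\perp})$. Under the hypothesis $\bigcup_{S\in\mathcal{B}_{k}(\mathcal{C}^{\perp})}S=[n]$, every coordinate $i$ lies in one of these supports, so by the observation above $c_i$ is recoverable from $k-1$ other coordinates; hence $\mathcal{C}$ has locality at most $k-1$. The matching lower bound is immediate, since any recovering relation is a nonzero codeword of $\mathcal{C}^{\perp}$ and therefore has weight at least $k$, forcing at least $k-1$ helper coordinates. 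Thus the locality equals $k-1$.

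The dual code $\mathcal{C}^{\perp}$ requires the extra ingredient supplied by Lemma \ref{weight2}, and this is where I expect the main work to lie. To recover a coordinate of $\mathcal{C}^{\perp}$ I must use parity checks coming from $(\mathcal{C}^{\perp})^{\perp}=\mathcal{C}$, whose minimum weight is $n-k$, so I need to know which coordinates are covered by the supports of the weight-$(n-k)$ codewords of $\mathcal{C}$. By Lemma \ref{weight2} these minimum-weight codewords of $\mathcal{C}$ are in support-disjoint correspondence with those of $\mathcal{C}^{\perp}$; since a weight-$(n-k)$ support and a weight-$k$ support are disjoint and their sizes sum to $n$, each such pair partitions $[n]$. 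The family of supports of minimum-weight codewords of $\mathcal{C}$ is therefore precisely $\{[n]\setminus S:S\in\mathcal{B}_{k}(\mathcal{C}^{\perp})\}$, the image of $\mathcal{B}_{k}(\mathcal{C}^{\perp})$ under complementation.

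With this identification in hand, the condition $\bigcap_{S\in\mathcal{B}_{k}(\mathcal{C}^{\perp})}S=\emptyset$ says exactly that for every $i\in[n]$ there is some $S$ with $i\notin S$, i.e.\ $i\in[n]\setminus S$. Thus every coordinate lies in the support of some weight-$(n-k)$ codeword of $\mathcal{C}$, which yields a recovering set of size $n-k-1$ for that coordinate of $\mathcal{C}^{\perp}$, giving locality at most $n-k-1$. The reverse inequality again follows from $d(\mathcal{C})=n-k$, since any recovering relation for a symbol of $\mathcal{C}^{\perp}$ is a nonzero codeword of $\mathcal{C}$ and hence has weight at least $n-k$. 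Therefore $\mathcal{C}^{\perp}$ has locality exactly $n-k-1$. The one genuinely subtle point throughout is the complementation step: translating the intersection and union hypotheses on $\mathcal{B}_{k}(\mathcal{C}^{\perp})$ into coverage statements for the correct dual via Lemma \ref{weight2}, which is what makes the two halves of the lemma come out dual to each other.
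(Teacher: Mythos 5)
Your proof is correct. The paper itself offers no proof of this lemma (it is quoted from the reference \cite{Tan2021}), so there is nothing to compare against line by line; your argument is the standard one, and you correctly identify and handle the only nontrivial step, namely using Lemma~\ref{weight2} together with $|{\rm supp}(\mathbf{c})|+|{\rm supp}(\mathbf{c}^{\perp})|=(n-k)+k=n$ to show that the supports of the minimum-weight codewords of $\mathcal{C}$ are exactly the complements of the sets in $\mathcal{B}_k(\mathcal{C}^{\perp})$, which converts the intersection hypothesis into a covering statement. One small remark: under the paper's definition, ``has locality $r$'' only asserts the existence of recovering sets of size at most $r$, so your matching lower bounds (which implicitly assume the recovering function can be taken linear) are not needed for the statement, though the linearization step they rely on is itself standard for linear codes.
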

%\proof
%Since $\bigcup_{S \in \mathcal{B}_{k}\left(\mathcal{C}^{\perp}\right)} \mathcal{S}=[n],$ for any $i \in [n],$ there exists a $k$-weight codeword $\textbf{c}^{\perp} \in \mathcal{C}^{\perp}$ such that $i \in supp(\textbf{c}^{\perp})$. By the definition of locality, we get $\mathcal {C}$ has locality $k-1$.
%
%Since $\bigcap_{S \in \mathcal{B}_{k}\left(\mathcal{C}^{\perp}\right)} \mathcal{S}=\emptyset,$ for any $i \in [n],$ there exists a $k$-weight codeword $\textbf{c}^{\perp} \in \mathcal{C}^{\perp}$ such that $i \notin supp(\textbf{c}^{\perp})$. Then by Lemma \ref{weight2}, there is a $(n-k)$-weight codeword $\textbf{c} \in \mathcal{C}$ such that $i \in supp(\textbf{c})$. Therefore, $\mathcal{C}^{\perp}$ has locality $n-k-1$.
%\qed
%
%Thus the localities of NMDS code $\mathcal {C}$ and its dual code $\mathcal {C}^{\perp}$ can be determined by the union and intersection of the support sets of the minimum weight codewords in $\mathcal {C}^{\perp}$, respectively.
%

For an $[n,k,n-k]$ NMDS code $\cC$ with generator matrix $G=[{\bf g}_1,{\bf g}_2,\ldots,{\bf g}_n]$. Let $S_G=\{{\bf g}_1,{\bf g}_2,\ldots,{\bf g}_n\}$. Lemma \ref{weight2} states a fact that if there exists a ${\bf u}\in\F_q^k$ satisfying $H_{\bf u}\cap S_{G}=\{{\bf g}_{i_1},{\bf g}_{i_2},\ldots,{\bf g}_{i_k}\}$. Then there exists a a $k$-weight codeword $\textbf{c}^{\bot} \in \cC ^{\bot}$ such that $supp(\textbf{c}^{\bot})=\{i_1,i_2,\ldots,i_k\}$.

%
% Let
%$$
%	\cU=\left\{{\bf 0}\neq{\bf u}\in\F_q^k:\#({\cal H}_{\bf u}\cap S_G)=k\right\}.
%$$
%It is easy to obtain from Lemma \ref{Weight-lem1} that for any ${\bf u} \in \cU$, there exists a $k$-weight codeword $\textbf{c}^{\bot} \in \cC ^{\bot}$ such that
%\begin{equation}\label{mini-plane}
%	supp(\textbf{c}^{\bot})=\{i : g_i \in {\cal H}_{\bf u}\cap S_G\}.
%\end{equation}

\begin{theorem}\label{locality-even}
The $[q+5, 3, q+2]$ NMDS code $\mathcal {C}_v$ generated by (\ref{generator}) has locality $2$, and its dual code $\mathcal {C}_v^{\perp}$ has locality $q+1$.
\end{theorem}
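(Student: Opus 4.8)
The plan is to invoke Lemma~\ref{localities} with $k=3$ and $n=q+5$, so that everything reduces to understanding the family $\mathcal{B}_{3}(\mathcal{C}_v^{\perp})$ of supports of weight-$3$ codewords of the dual. By Lemma~\ref{weight2} together with the correspondence noted just before the theorem, each such support is exactly a triple $\{i_1,i_2,i_3\}$ for which the line $\mathcal{H}_{\bf u}$ meets $S_v$ in the three points $\{{\bf g}_{i_1},{\bf g}_{i_2},{\bf g}_{i_3}\}$; in other words $\mathcal{B}_{3}(\mathcal{C}_v^{\perp})$ is precisely the collection of collinear triples of $S_v$, all of which were already enumerated in the proof of Theorem~\ref{q+5}. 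I would therefore reuse that case analysis rather than recompute the secant lines from scratch.

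To obtain locality $2$ for $\mathcal{C}_v$, I must verify $\bigcup_{S\in\mathcal{B}_{3}(\mathcal{C}_v^{\perp})}S=[q+5]$, i.e.\ that every point of $S_v$ lies on some $3$-secant. The decisive observation is that $f(x)+x$ is $2$-to-$1$ on $\F_q$ by Lemma~\ref{2-to-1}, so its fibers partition $\F_q$ into $q/2$ pairs $\{\alpha_{i_1},\alpha_{j_1}\}$, and each pair produces a secant $\{{\bf g}_{i_1},{\bf g}_{j_1},{\bf g}_{q+3}\}$ of the type found in Case~1 of Theorem~\ref{q+5}. Hence every curve point ${\bf g}_1,\ldots,{\bf g}_q$ is covered. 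The five remaining points are then caught by the two explicit secants $\{{\bf g}_{q+1},{\bf g}_{q+2},{\bf g}_{q+3}\}$ and $\{{\bf g}_{q+3},{\bf g}_{q+4},{\bf g}_{q+5}\}$, which together contain ${\bf g}_{q+1},\ldots,{\bf g}_{q+5}$. Assembling these shows the union is all of $[q+5]$, and Lemma~\ref{localities} gives locality $2$.

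For locality $q+1$ of the dual I need the opposite condition $\bigcap_{S\in\mathcal{B}_{3}(\mathcal{C}_v^{\perp})}S=\emptyset$, which is the easy direction: it suffices to name three supports with empty common intersection. The triples $\{q+1,q+2,q+3\}$, $\{q+3,q+4,q+5\}$ and $\{q,q+2,q+4\}$ serve, since the first two meet only in $q+3$ while the third omits $q+3$. Applying Lemma~\ref{localities} once more yields locality $n-k-1=q+1$ for $\mathcal{C}_v^{\perp}$.

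I expect the only genuine work to lie in the covering step: one must confirm that the $2$-to-$1$ fibers of $f(x)+x$ really sweep out all of $\F_q$, including the fiber through $\alpha_q=0$ (which pairs $0$ with the second root of $f(x)=x$), and that the handful of added points ${\bf g}_{q+1},\ldots,{\bf g}_{q+5}$ are each accounted for by a listed secant. The empty-intersection step, by contrast, is immediate once any two sufficiently disjoint triples are exhibited.
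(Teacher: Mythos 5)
Your proposal is correct and follows essentially the same route as the paper: reduce to Lemma~\ref{localities}, identify $\mathcal{B}_{3}(\mathcal{C}_v^{\perp})$ with the $3$-secants of $S_v$ already enumerated in Theorem~\ref{q+5}, cover $[q]$ by the fibers of the $2$-to-$1$ map $f(x)+x$ through ${\bf g}_{q+3}$, and exhibit a few explicit triples with empty common intersection. The only (immaterial) differences are your choice of $\{q+3,q+4,q+5\}$ in place of the paper's $\{q,q+1,q+5\}$ for the intersection step, and that your fiber-partition phrasing covers $\alpha_q=0$ directly rather than via the extra explicit supports.
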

\proof
By Lemma \ref{localities}, we only have to prove
$
\bigcup\limits_{S \in \mathcal{B}_{3}\left(\mathcal{C}_v^{\perp}\right)} \mathcal{S}=[q+5]
~~\mbox{and}~
\bigcap\limits_{S \in \mathcal{B}_{3}\left(\mathcal{C}_v^{\perp}\right)} \mathcal{S}=\emptyset.
$
From the proof of Theorem \ref{q+5}, it's easy to see that there are three codewords $\{{\bf c}_{01}^{\bot},{\bf c}_{02}^{\bot},{\bf c}_{03}^{\bot}\}$ of ${\cal C}^{\perp}_v$ satisfying
\begin{align*}
supp({\bf c}_{01}^{\bot})&=	\{q+1,q+2,q+3\},\\
supp({\bf c}_{02}^{\bot})&=	\{q,q+2,q+4\},\\
supp({\bf c}_{03}^{\bot})&=	\{q,q+1,q+5\}.
\end{align*}
Obviously $\bigcap\limits_{i=1}^3 supp({\bf c}_{0i}^{\bot})=\emptyset$, which results in $\bigcap\limits_{S \in \mathcal{B}_{3}\left(\mathcal{C}_v^{\perp}\right)} \mathcal{S}=\emptyset$. Therefore the dual code ${\cal C}^{\perp}_v$  has locality $n-k-1=q+1$.

For any $\alpha_i\in\F_q^*$, define $u_3=f(\alpha_i)+\alpha_i$ and ${\bf u}=(1,1,u_3)$. By the proof of Theorem \ref{q+5}, we have ${\cal H}_{\bf u}\cap S_{G_v}=\{{\bf g}_i,{\bf g}_{i'},{\bf g}_{q+3}\}$, where $i'\in [q]$ is the unique number distinct with $i$ such that $f(\alpha_{i'})+\alpha_{i'}=u_3$. That is, for any $i \in [q-1]$, there exists a codeword ${\bf c}_i^{\bot}\in{\cal C}^{\bot}$ satisfying $supp({\bf c}_{i}^{\bot})=\{i,i',q+3\}$. Then
\[\left(\bigcup\limits_{i=0}^{q-1}supp({\bf c}_{i}^{\bot})\right)\cup \left(\bigcup\limits_{i=1}^3 supp({\bf c}_{0i}^{\bot})\right)=[q+5],
\]
which results in $\bigcup\limits_{S \in \mathcal{B}_{3}\left(\mathcal{C}_v^{\perp}\right)} \mathcal{S}=[q+5]$. Therefore the code ${\cal C}_v$ has locality $k-1=2$.
\qed

\begin{theorem}\label{d-optimal-even}
The NMDS code $\mathcal {C}_v$ generated by (\ref{generator}) and its dual code $\mathcal {C}_v^{\perp}$ are both $d$-optimal and $k$-optimal LRC codes.
\end{theorem}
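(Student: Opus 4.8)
The plan is to verify the two optimality criteria directly by substituting the parameters of $\mathcal{C}_v$ and $\mathcal{C}_v^{\perp}$ into the bounds recalled in this section. From Theorem \ref{locality-even}, $\mathcal{C}_v$ is a $(q+5,3,2)$-LRC code with minimum distance $d=q+2$; since the dual of an $[n,k,n-k]$ NMDS code is again NMDS with parameters $[n,n-k,k]$, its dual $\mathcal{C}_v^{\perp}$ is a $(q+5,q+2,q+1)$-LRC code with minimum distance $d^{\perp}=3$. So throughout I write $r=2$ for $\mathcal{C}_v$ and $r=q+1$ for $\mathcal{C}_v^{\perp}$.

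First I would establish $d$-optimality by checking that each code meets the Singleton-like bound (\ref{Singleton}) with equality. For $\mathcal{C}_v$ one computes $n-k-\lceil k/r\rceil+2=(q+5)-3-\lceil 3/2\rceil+2=q+2=d$, and for $\mathcal{C}_v^{\perp}$ one computes $n-k-\lceil k/r\rceil+2=(q+5)-(q+2)-\lceil (q+2)/(q+1)\rceil+2=3=d^{\perp}$, where I use $\lceil 3/2\rceil=2$ and $\lceil (q+2)/(q+1)\rceil=2$. This gives $d$-optimality for both codes.

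Next I would establish $k$-optimality via the Cadambe--Mazumdar bound (\ref{dim-bound}). Because $k\le\min_{t}[\,tr+k_{\mathrm{opt}}^{(q)}(n-t(r+1),d)\,]$ always holds, it is enough to exhibit a single $t$ for which the bracketed quantity equals $k$; I expect $t=1$ to work in both cases. For $\mathcal{C}_v$ this reduces the right-hand side to $r+k_{\mathrm{opt}}^{(q)}(q+2,q+2)$, and for $\mathcal{C}_v^{\perp}$ to $r+k_{\mathrm{opt}}^{(q)}(3,3)$. In each case the residual evaluation is of the degenerate form $k_{\mathrm{opt}}^{(q)}(m,m)$, i.e.\ a code whose length equals its minimum distance; by the Singleton bound any such code has dimension at most $1$, so $k_{\mathrm{opt}}^{(q)}(m,m)=1$. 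Hence the bracket equals $r+1$, which is exactly $k$ for both codes (since $2+1=3$ and $(q+1)+1=q+2$). Combined with the universal inequality this forces the minimum to equal $k$, giving $k$-optimality.

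The whole argument is a sequence of short substitutions, so I do not anticipate a genuine obstacle; the only points demanding care are choosing $t=1$ and recognizing that the induced evaluations $k_{\mathrm{opt}}^{(q)}(m,m)=1$ are forced by the Singleton bound rather than requiring any nontrivial extremal-code input.
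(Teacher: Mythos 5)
Your proposal is correct and follows essentially the same route as the paper: substitute the parameters into the Singleton-like bound for $d$-optimality, then take $t=1$ in the Cadambe--Mazumdar bound and evaluate $k_{\mathrm{opt}}^{(q)}(m,m)=1$ via the classical Singleton bound for $k$-optimality. The only difference is that you carry out the computation for $\mathcal{C}_v^{\perp}$ explicitly, whereas the paper dismisses it as ``similar''.
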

\proof
Putting the parameters of $\mathcal {C}_v$ into the Singleton-like bound (\ref{Singleton}), we have
$$q+2=d \leq n-k-\lceil\frac{k}{r}\rceil+2=q+5-3-\lceil\frac{3}{2}\rceil+2=q+2.$$
Therefore, $\mathcal {C}$ is a $d$-optimal LRC code.

Putting the parameters of $\mathcal {C}_v$ into the Cadambe-Mazumdar bound (\ref{dim-bound}), we have
$$
\begin{aligned}
3=k & \leq \min _{t \in \mathbb{Z}^{+}}\left\{r t+k_{o p t}^{(q)}(n-(r+1) t, d)\right\} \\
& \leq r+k_{o p t}^{(q)}(n-(r+1), d) \\
&=2+k_{o p t}^{(q)}(q+5-(2+1), q+2) \\
&=3.
\end{aligned}
$$
where the last equality holds as $k_{o p t}^{(q)}(q+2, q+2)$=1 by the classical Singleton bound. Therefore, $\mathcal {C}$ is a $k$-optimal LRC code.

Similarly, we can prove that $\mathcal {C}_v^{\perp}$ is both $d$-optimal and $k$-optimal LRC code.
\qed

%Similar results can be obtained for the NMDS code $\cC_w$ generated by (\ref{generator2}), and the proof is omitted.

\begin{theorem}
The NMDS code $\mathcal {C}_w$ generated by (\ref{generator2}) has locality $2$, its dual code $\mathcal {C}^{\perp}$ has locality $q+1$. In addition, they are both  $d$-optimal and $k$-optimal LRC codes.
\end{theorem}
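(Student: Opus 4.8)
The plan is to mirror Theorems \ref{locality-even} and \ref{d-optimal-even}, transferring the argument from the hyperoval code $\mathcal{C}_v$ to the oval code $\mathcal{C}_w$. By Lemma \ref{localities}, the two locality claims reduce to showing that the supports in $\mathcal{B}_{3}(\mathcal{C}_w^\perp)$ jointly cover $[q+5]$ (yielding locality $k-1=2$ for $\mathcal{C}_w$) and have empty common intersection (yielding locality $n-k-1=q+1$ for $\mathcal{C}_w^\perp$). By the correspondence recorded just before Theorem \ref{locality-even}, every line $\mathcal{H}_{\bf u}$ meeting $S_w$ in exactly three points produces a weight-$3$ codeword of $\mathcal{C}_w^\perp$ whose support is the triple of those column indices, so I only need to read off suitable triples from the case analysis already performed in the proof of Theorem \ref{q+5odd}.

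For the dual locality I would exhibit three triples with empty intersection, exactly as in the even case: $\{q+1,q+2,q+3\}$ (Case~1 with ${\bf u}=(0,0,u_3)$), $\{q,q+2,q+4\}$ (Case~1 with ${\bf u}=(u_1,0,0)$), and $\{q,q+1,q+5\}$ (Case~4 with ${\bf u}=(0,u_2,0)$). Since $\{q+1,q+2,q+3\}\cap\{q,q+2,q+4\}\cap\{q,q+1,q+5\}=\emptyset$, this gives $\bigcap_{S\in\mathcal{B}_{3}(\mathcal{C}_w^\perp)}S=\emptyset$. For the code locality I must cover all of $[q+5]$; the three triples above already cover $\{q,q+1,\dots,q+5\}$, so it remains to cover the nonzero oval points, i.e.\ the indices $1,\dots,q-1$. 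For this I would run through lines incident with ${\bf g}_{q+2}$: for each nonzero $\alpha_i$ set ${\bf u}=(1,0,-\alpha_i^2)$, so $x^2-\alpha_i^2=0$ has the two distinct roots $\pm\alpha_i$ and, by Case~1 of Theorem \ref{q+5odd}, $\mathcal{H}_{\bf u}\cap S_w=\{{\bf g}_i,{\bf g}_{i'},{\bf g}_{q+2}\}$ with $\alpha_{i'}=-\alpha_i$; letting $i$ range over $1,\dots,q-1$ covers all these indices.

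The step needing care, and where the odd case genuinely differs from the even one, is the choice of which extra point to project through. Projecting through ${\bf g}_{q+3}$ would induce the involution $x\mapsto 1-x$ on oval points, whose fixed point $x=\tfrac12$ is a nonzero oval point that would then be missed (its line is tangent, meeting the oval only once and hence producing no weight-$3$ dual codeword). Projecting through ${\bf g}_{q+2}$ instead induces $x\mapsto -x$, whose only fixed point $x=0$ is the index-$q$ point already covered by the special triples. One must also verify that these lines really meet the oval in two points and avoid ${\bf g}_{q+5}$; this holds because $u_3=-\alpha_i^2\neq 0$, and since $\eta(w)=-1$ forces $w$ to be a nonsquare, $u_3=-\alpha_i^2\neq -w$. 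With the covering and the empty-intersection property in hand, Lemma \ref{localities} yields localities $2$ and $q+1$.

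Finally, for optimality I would substitute the parameters into the two bounds exactly as in Theorem \ref{d-optimal-even}. For $\mathcal{C}_w$ with $(n,k,r,d)=(q+5,3,2,q+2)$, the Singleton-like bound (\ref{Singleton}) gives $q+2\le q+5-3-\lceil 3/2\rceil+2=q+2$, and the Cadambe--Mazumdar bound (\ref{dim-bound}) at $t=1$ gives $3\le 2+k_{\mathrm{opt}}^{(q)}(q+2,q+2)=3$, using $k_{\mathrm{opt}}^{(q)}(q+2,q+2)=1$ from the classical Singleton bound; hence $\mathcal{C}_w$ is both $d$- and $k$-optimal. For the dual, $\mathcal{C}_w^\perp$ is an $[q+5,q+2,3]$ NMDS code with $r=q+1$, and the same substitutions give $3\le q+5-(q+2)-\lceil(q+2)/(q+1)\rceil+2=3$ and $q+2\le(q+1)+k_{\mathrm{opt}}^{(q)}(3,3)=q+2$, so $\mathcal{C}_w^\perp$ is $d$- and $k$-optimal as well. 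I expect the covering step of the code locality to be the only non-routine part; everything else is a direct transcription of the even-case proofs.
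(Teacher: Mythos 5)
Your proof is correct and follows exactly the route the paper intends: the paper's own ``proof'' of this theorem consists of the single remark that it is almost the same as Theorems \ref{locality-even} and \ref{d-optimal-even}, and your write-up is precisely that transcription, with the three special triples $\{q+1,q+2,q+3\}$, $\{q,q+2,q+4\}$, $\{q,q+1,q+5\}$ and the parameter substitutions all checking out. You also correctly identify and resolve the one genuine difference in the odd-characteristic case --- covering the indices $1,\dots,q-1$ by projecting through ${\bf g}_{q+2}$ via $x\mapsto -x$ (fixed point $x=0$, already covered) rather than through ${\bf g}_{q+3}$ via $x\mapsto 1-x$ (whose tangent at $x=\tfrac{1}{2}$ would leave that index uncovered), together with the observation that $\eta(w)=-1$ keeps ${\bf g}_{q+5}$ off these lines --- so your version is in fact more complete than what the paper records.
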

\proof
The proof is almost the same as that of Theorem \ref{locality-even} and Theorem \ref{d-optimal-even}, so we omit it here.
\qed

%\begin{theorem}
%The NMDS code $\mathcal {C}$ generated by (\ref{generator2}) and its dual code $\mathcal {C}^{\perp}$ are both the $d$-optimal and $k$-optimal LRC codes.
%\end{theorem}

\section{Conclusion}\label{Conclusion}

In this paper, we present two constructions of  $[q+5,3,q+2]$ NMDS codes over $\F_q$ for any prime power $q$, by applying hyperovals or ovals in PG$(2,q)$ respectively. The main idea is to add some suitable points of PG$(2,q)$ to the ovals or hyperovals, such that the resultant  point-sets are $(q+5,3)$-arcs in PG$(2,q)$, which are equivalent to $[q+5,3,q+2]$ NMDS codes over $\F_q$. We call this procedure as an extension of arcs.
The weight enumerators and the localities of these NMDS codes are also determined. It turns out that
our resultant NMDS codes and their dual codes are both distance-optimal and dimension-optimal locally repairable codes.

Let $m'(3,q)$ denote the maximal value of $n$ for an $[n,3,n-3]$ NMDS codes
to exist. It is conjectured that $m'(3,q)\leq 2q-1$ for $q\geq8$ in \cite{Dodunekov1995}, and this is only known to be true for $q=8,9$. Researches on $m'(3,q)$ are mostly focused on fixed small $q$, such as \cite{2014Bartoli}, \cite{1999Marcugini}, \cite{2002Marcugini} and the references therein.

By the extension of hyperovals in PG$(2,2^m)$, we can find by magma more examples of longer $[n,3,n-3]$  with $n\in\{2^3+7,2^4+9,2^5+11\}$.  Especially, the $[2^3+7,3,12]$ NMDS code over $\F_{2^3}$ can be obtained
by the generator matrix
 \begin{equation}
G=\left(\begin{array}{lllllllllllllll}
\varsigma^5& \varsigma^3 &  \varsigma& \varsigma^6 &\varsigma^4 &\varsigma^2 &  1 &  0 &  1 &  0&   1 &  0 &\varsigma^5 &  \varsigma &\varsigma^2\\
\varsigma^6& \varsigma^5& \varsigma^4& \varsigma^3& \varsigma^2&   \varsigma &  1  & 0 &  0 &  1  & 1& \varsigma^5 &  0& \varsigma^3&   1\\
  1 &  1 &  1 &  1 &  1  & 1 &  1 &  1  & 0 &  0 &  0  & 1  & 1 &  1 &  1
\end{array}\right),
\end{equation}
where $\varsigma$ is a generator of $\F_{2^3}^*$ satisfying $\varsigma^3+\varsigma+1=0$. It is worth noting that
the code length $n=15=2\times8-1$ coincides with the result of $m'(3,8)$, and also exceeds the maximum code length $n'=2^3+\lfloor 2\sqrt{2^3}\rfloor+1 =14$  obtained by elliptic curves.

An interesting open problem is whether there exist more
 NMDS codes whose code-lengths exceed that of the elliptic curve codes.
%An interesting open problem is whether there are more NMDS codes whose code-length exceeds that of the elliptic curve codes by the extension of arcs.
Another open problem is to determine the maximum extension number from arcs to $(n,3)$-arcs in PG$(2,q)$. It would be nice if this problem could be determined, even for some fixed classes of $q$.

%\section*{Acknowledgments}
%The authors are very grateful to the reviewer and the
%Associate Editor, Prof. James W.P. Hirschfeld, for their comments and
%suggestions that improved the presentation and quality of this
%paper. This work was finished when the authors visited the Hong Kong University of Science and Technology. The authors
%are grateful to Professor Cunsheng Ding for bringing them together
%in the summer of 2014.

%\section*{Acknowledgements}
%The authors are grateful to the anonymous reviewers for careful reading and for invaluable suggestions which improve the quality of the paper.

\end{document}